\newtheorem{theorem}{Theorem}
\newtheorem{lemma}[theorem]{Lemma}
\newtheorem{proposition}[theorem]{Proposition}
\newtheorem{definition}{Definition}
\newcommand{\fig}[3]{%
	\begin{figure}
		\centering
		\includegraphics{#3}
		\caption{#2}
		\label{#1}
	\end{figure}
}
\newcommand{\figX}[3]{%
	\begin{figure}
		\centering
		#3
		\caption{#2}
		\label{#1}
	\end{figure}
}
\newcommand{\algo}[1]{
	\bigskip
	\hrule
	\microtypesetup{tracking=false}
	\begin{algorithmic}
		#1
	\end{algorithmic}
	\microtypesetup{tracking=true}
	\hrule
	\bigskip
}
\newcommand{\p}[2]{\{#1, \ldots, #2\}}
\newcommand{\pij}{\p{i}{j}}
\newcommand{\pji}{\p{j}{i}}
\newcommand{\Matching}{\textsc{Matching}}
\newcommand{\OO}[1]{\mathcal{O}(#1)}
\newcommand{\OA}{\mathcal{A}}
\newcommand{\OB}{\mathcal{B}}
\newcommand{\OC}{\mathcal{C}}
\newcommand{\OD}{\mathcal{D}}
\newcommand{\OL}{\mathcal{L}}
\newcommand{\OH}{\mathcal{H}}
\newcommand{\GR}{\mathcal{G}}
\newcommand{\bn}{\textrm{bn}}
\begin{document}

	\title{Structural Properties of Bichromatic Non-crossing Matchings}

	\author{
		Marko Savi\'{c}\footnote{University of Novi Sad, Faculty of Sciences, Department of Mathematics and Informatics.
				Partly supported by Ministry of Education, Science and Technological Development of the Republic of Serbia (Grant No.~451-03-9/2021-14/200125).
				Partly supported by Provincial Secretariat for Higher Education and Scientific Research, Province of Vojvodina (Grant No.~142-451-3227/2020-01).
			{\tt \{marko.savic, milos.stojakovic\}@dmi.uns.ac.rs}}
		\and
		Milo\v{s} Stojakovi\'{c}\footnotemark[1]
	}

	\maketitle

	\begin{abstract}
		Given a set of $n$ red and $n$ blue points in the plane, we are interested in matching red points with blue points by straight line segments so that the segments do not cross. We develop a range of tools for dealing with the non-crossing matchings of points in convex position. It turns out that the points naturally partition into groups that we refer to as orbits, with a number of properties that prove useful for studying and efficiently processing the non-crossing matchings.

		Bottleneck matching is a matching that minimizes the length of the longest segment. Illustrating the use of the developed tools, we solve the problem of finding bottleneck matchings of points in convex position in $O(n^2)$ time. Subsequently, combining our tools with a geometric analysis we design an $O(n)$-time algorithm for the case where the given points lie on a circle. The best previously known running times were $O(n^3)$ for points in convex position, and $O(n \log n$) for points on a circle.
	\end{abstract}

	\section{Introduction}

	\subsection{Bichromatic non-crossing matchings}
	
	Let $R$ and $B$ be sets of $n$ red and $n$ blue points in the plane, respectively, with $P = R \cup B$ and $R \cap B = \emptyset$. Let $M$ be a perfect matching of points in $R$ to points in $B$ using $n$ straight line segments, that is, each point is an endpoint of exactly one line segment, and each line segment has one red and one blue endpoint. If the line segments do not cross, we refer to such a matching as a bichromatic non-crossing matching.
	
	\subparagraph{Related work}
	
	Geometric non-crossing matchings by straight line segments are widely researched. In case there are two groups of objects and the members of one group are to be matched with the members of the other, we naturally arrive to the bichromatic matchings, often also referred to as the \emph{red-blue matchings}. The examples of real-life problems that fall into this category are numerous, with a whole range of the so-called problems of \emph{supply and demand}, e.g.~matching shoppers and shops, antennas and receivers, etc. A survey by Kaneko and Kano~\cite{kaneko2003discrete} gives an overview of various problems on red and blue points in the plane, including the matching problems. Several papers \cite{aloupis2013non,kratochvil2013non} take a closer look at the algorithms for finding non-crossing planar straight line matchings between red points on one side and various blue objects (in more generality) on the other, devoting particular attention to the special case of blue objects also being points. Note that some geometric versions of the Monge-Kantorovich transportation problem, an optimization problem for matching mines with factories to minimize cost, see~\cite{bogachev2012monge} for a survey, result in crossing-free straight line matchings of mines and factories, with possible multiplicities depending on the weight distribution.
	
	Several papers~\cite{aichholzer2012compatible,aloupis2015bichromatic, aichholzer2018linear} study the collection of \emph{all} possible bichromatic non-crossing straight line red-blue matchings of two given equal-sized planar sets of red and blue points. A number of interesting structural properties of this collection are established, looking into the possibilities to gradually change one matching into the other through a sequence of matchings, such that every pair of consecutive matchings in the sequence has a non-crossing union. A similar problem on monochromatic point sets, where every pair of points is allowed to be matched, has also been looked at, see~e.g.~\cite{aichholzer2009compatible}.
	
	The problem of finding a geometric non-crossing Hamilton path that alternately visits the elements of a given red point set and equally sized blue point set, with certain additional requirements, was studied in~\cite{kaneko2004path}, with an obvious connection to the red-blue matchings (as we can find one in each such Hamilton path). In~\cite{hurtado2008encompassing}, the connections of crossing-free red-blue planar matchings and crossing-free red-blue spanning trees are explored. Bounds on the total number of crossing-free red-blue perfect matchings are given in~\cite{sharir2006number}.

	\subparagraph{Our results}
	
	We take a closer look at the bichromatic non-crossing matchings of points in convex position. In this case, it is straightforward to see that two line segments of a matching cross if and only if their two pairs of endpoints are interleaved in the cyclic order around the convex hull of the given point set. Therefore, the collection of all valid matchings is fully determined by the sequence of red and blue points around the convex hull.
	
	In order to efficiently deal with bichromatic non-crossing matchings on points in convex position we introduce a structure that we refer to as \emph{orbits}, which turn out to capture well the properties of such matchings. As we will show, the points naturally partition into sets, i.e.,~orbits, in such a way that two points of different colors can be connected by a segment in a non-crossing perfect matching if and only if they belong to the same orbit. We go on to study the structure of individual orbits, their properties, as well as the relationship of different orbits of the same point set.
	
	This apparatus enables us to get a grip on the bichromatic non-crossing matchings of points in convex position and work with them in a more efficient manner, with a potential to apply our machinery on the whole range of problems dealing with these matchings. We will present one such application in this article. It is worth noting that another application to several matching optimisation problems recently appeared in~\cite{mantas2021new}.

	\subsection{Bottleneck matchings}
	
	We will illustrate the applicability of our theory of orbits on the problem of efficiently finding the so-called \emph{bottleneck} bichromatic non-crossing matching of points in convex position.
	
	Denote the length of a longest line segment in a straight segment geometric matching $M$ with $\bn(M)$, which we also call the \emph{value} of $M$. We aim to find a perfect 	matching under given constraints that minimizes $\bn(M)$. Any such matching is called a \emph{bottleneck matching} of $P$.

	\subparagraph{Bottleneck matchings -- monochromatic case.}
	
	The monochromatic variant of the problem is the case where points are not assigned colors, and any two points are allowed to be matched.
	
	In \cite{chang1992solving}, Chang, Tang and Lee gave an $O(n^2)$-time algorithm for computing a bottleneck matching of a point set, but allowing crossings. This result was extended by Efrat and Katz in \cite{efrat2000computing} to higher-dimensional Euclidean spaces.
	
	The problem of computing bottleneck monochromatic non-crossing matching of a point set is shown to be NP-complete by Abu-Affash, Carmi, Katz and Trablesi in \cite{abu2014bottleneck}. They also proved that it does not allow a PTAS, gave a $2\sqrt{10}$ factor approximation algorithm, and showed that the case where all points are in convex position can be solved exactly in $O(n^3)$ time. We improved this result in \cite{savic2017faster} by constructing an $O(n^2)$-time algorithm.
	
	
	\subparagraph{Bottleneck matchings -- bichromatic case.}
	
	The problem of finding a bottleneck bichromatic non-crossing matching was proved to be NP-complete by Carlson, Armbruster, Bellam and Saladi in \cite{carlsson2015bottleneck}. But for the version where crossings are allowed, Efrat, Itai and Katz showed in \cite{efrat01geometryhelps} that a bottleneck matching between two point sets can be found in $O(n^{3/2}\log n)$ time.
	
	Biniaz, Maheshwari and Smid in \cite{biniaz2014bottleneck} studied special cases of bottleneck bichromatic non-crossing matchings. They showed that the case where all points are in convex position can be solved in $O(n^3)$ time, utilizing an algorithm similar to the one for monochromatic case presented in \cite{abu2014bottleneck}. They also considered the case where the points of one color lie on a line and all points of the other color are on the same side of that line, providing an $O(n^4)$ algorithm to solve it. The same results for these special cases are independently obtained in \cite{carlsson2015bottleneck}. An even more restricted problem is studied in \cite{biniaz2014bottleneck}, a case where all points lie on a circle, for which an $O(n \log n)$-time algorithm is given.
	
	A variant of the bichromatic case is the so-called bicolored (or multicolored, when there are arbitrary many colors) case, where only the points of the \emph{same} color are allowed to be matched. Abu-Affash, Bhore and Carmi in \cite{abu2017monochromatic} examined bicolored matchings that minimize the number of crossings between edges matching different color sets. They presented an algorithm to compute a bottleneck matching of points in convex position among all matchings that have no crossings of this kind.
	
	\subparagraph{Our results}
	
	Using the orbit theory we solve the problem of finding a bottleneck bichromatic non-crossing matching of points in convex position in $O(n^2)$ time, improving upon the best previously known algorithm of $O(n^3)$-time complexity. Also, combining the same tool set with a geometric analysis we design an optimal $O(n)$ algorithm for the same problem when the points lie on a circle, where the best previously known algorithm has $O(n \log n)$-time complexity.
	
	\subsection{Preliminaries and organization}
	As we deal with bichromatic perfect matchings without crossings, from now on, when we talk about matchings, it is understood that we refer to bichromatic matchings that are both perfect and crossing-free.
	
	Also, we assume that the given points in $P$ are in convex position, i.e.,~they are the vertices of a convex polygon $\mathcal{P}$. Let us label the points of $P$ by $p_0, p_1, \ldots, p_{2n-1}$ in the positive (counterclockwise) direction. To simplify the notation, we will often use only indices when referring to points. We write $\pij$ to represent the set $\{i, i+1, i+2, \ldots, j-1, j\}$. Arithmetic operations on indices are done modulo $2n$. Note that $i$ is not necessarily less than $j$, and that $\pij$ is not the same as $\pji$.
	
	\begin{definition}[Balanced, Blue-heavy, Red-heavy]
		A bichromatic set of points is \emph{balanced} if it contains the same number of red and blue points. If the set has more red points than blue, we say that it is \emph{red-heavy}, and if there are more blue points than red, we call it \emph{blue-heavy}.
	\end{definition}
	
	As we already mentioned, we assume that $P$ consists of $n$ red and $n$ blue points, i.e.,~it is balanced.
	
	The following lemma is a well-known result that ensures the existence of a balanced matching on a point set. A couple of proofs, along with an algorithm that computes one such matching in $O(n \log^2 n)$ time, can be found in~\cite{atallah1985matching}.
	
	\begin{lemma}
		\label{lem:MatchingAlwaysPossible}
		Every balanced set of points admits a matching.
	\end{lemma}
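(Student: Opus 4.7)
I would argue by induction on $n$, the common number of red and blue points. The base case $n=0$ gives the empty matching. For the inductive step, the plan is to locate a single edge that can be safely added to a matching of a smaller instance.

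The key observation is that, since the points lie on a convex polygon $\mathcal{P}$ and the set is balanced with $n \geq 1$, not all points share the same color, so walking around the boundary of $\mathcal{P}$ we must encounter at least one color change. Pick two adjacent vertices $v_i$, $v_{i+1}$ of $\mathcal{P}$ with different colors and provisionally match them by the segment $s = v_i v_{i+1}$.

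Now consider the point set $P' = P \setminus \{v_i, v_{i+1}\}$. Removing two consecutive vertices of a convex polygon leaves the remaining $2(n-1)$ points in convex position (they are vertices of the polygon obtained by cutting off the triangle $v_{i-1} v_i v_{i+1}$, say), and $P'$ is again balanced since we removed one red and one blue point. By the inductive hypothesis, $P'$ admits a non-crossing perfect matching $M'$. I would then set $M = M' \cup \{s\}$ and verify that $M$ is a non-crossing perfect matching of $P$. The only thing to check is that $s$ crosses no edge of $M'$: the chord $s = v_i v_{i+1}$ bounds an open half-plane containing no point of $P'$ (because $v_i, v_{i+1}$ are adjacent on $\mathcal{P}$), so every edge of $M'$ has both endpoints in the opposite closed half-plane, hence is disjoint from the relative interior of $s$.

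I do not expect a substantial obstacle here; the only thing that needs a little care is the geometric claim that a chord between two consecutive vertices of a convex polygon is disjoint from every chord between non-adjacent vertices, which follows immediately from the fact that one side of the line through $v_i v_{i+1}$ contains no other vertices of $\mathcal{P}$. Everything else is bookkeeping for the induction.
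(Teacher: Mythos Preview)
Your proof is correct and takes a genuinely different (and, in the convex setting, simpler) route than the paper's. The paper fixes an arbitrary point $v$, performs an angular sweep of the remaining points around $v$ starting from one convex-hull neighbor of $v$ and ending at the other, and uses an intermediate-value argument on the running blue/red difference to locate a partner $u$ so that the line $vu$ splits the set into two balanced halves; it then recurses on both halves. Your argument instead exploits convex position directly: a balanced set on a convex polygon must exhibit a color change somewhere along the boundary, so two consecutive vertices of opposite color exist; you peel off that boundary edge and recurse on the remaining $2(n-1)$ points.

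What each approach buys: yours is shorter and avoids the sweep/counting step entirely, but it relies on the points being in convex position (so that ``consecutive on $\mathcal{P}$'' is meaningful and the chosen segment is a boundary edge separating nothing from the rest). The paper's sweep argument works for an arbitrary finite point set, since it only uses that $v$ lies on the convex hull of that set. Within this paper every invocation of the lemma is to a contiguous arc $\{i,\ldots,j\}$ of the convex set $P$, hence itself convex, so your restriction is harmless here; just be aware that the lemma as stated does not assume convexity, and your argument proves the slightly weaker convex-position version.
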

	
	\begin{definition}[Feasible pair]
		We say that $(i,j)$ is a \emph{feasible pair} if there exists a matching containing $(i,j)$.
	\end{definition}
	
	We will make good use of the following characterization of feasible pairs.
	
	\begin{lemma}
		\label{lem:PairFeasible}
		A pair $(i,j)$ is feasible if and only if $i$ and $j$ have different colors and $\p{i}{j}$ is balanced.
	\end{lemma}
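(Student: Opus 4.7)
The plan is to prove the two implications separately, with both directions relying on the fact that the chord from $v_i$ to $v_j$ cuts the convex polygon $\mathcal{P}$ into two sub-polygons whose vertex sets are $\pij$ and $\pji$ (sharing only $i$ and $j$).

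For necessity, I would start with a matching $M$ containing the edge $(i,j)$. Since $M$ is bichromatic, $i$ and $j$ have different colors. Any other edge of $M$ must lie entirely on one side of the chord $v_iv_j$, since crossings are forbidden and the chord is itself an edge of $M$. In particular, the restriction of $M$ to $\pij \setminus \{i,j\}$ is a perfect bichromatic matching of that set, so $\pij \setminus \{i,j\}$ is balanced. Adding back $i$ and $j$, one of each color, preserves balance, so $\pij$ is balanced.

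For sufficiency, I would assume $i$ and $j$ have distinct colors and $\pij$ is balanced. Since $P$ is balanced, $\pji$ is balanced as well, and removing the one red point and the one blue point in $\{i,j\}$ from each side keeps both $\pij \setminus \{i,j\}$ and $\pji \setminus \{i,j\}$ balanced. Each of these sub-point sets is in convex position (being a subset of the vertices of $\mathcal{P}$), so Lemma~\ref{lem:MatchingAlwaysPossible} provides a matching on each. Combining these two matchings with the edge $(i,j)$ yields a perfect bichromatic matching of $P$, and it is non-crossing because each sub-matching lies strictly on its own side of the chord $v_iv_j$, so no two edges can cross each other or the chord.

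The argument is largely a direct chase through definitions once the chord-separation observation is made; the only step that could go wrong is an off-by-two in the balance count, which is why I want to emphasize explicitly that $i$ and $j$ having different colors is exactly what makes the balance of $\pij$ equivalent to the balance of $\pij \setminus \{i,j\}$. No harder obstacle is expected, since the real work of producing a matching on each side is offloaded to Lemma~\ref{lem:MatchingAlwaysPossible}.
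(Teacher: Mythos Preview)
Your proof is correct and follows essentially the same approach as the paper's own proof: both directions use the chord $v_iv_j$ to separate $P$ into the two balanced pieces $\p{i+1}{j-1}$ and $\p{j+1}{i-1}$ (your $\pij\setminus\{i,j\}$ and $\pji\setminus\{i,j\}$), invoke Lemma~\ref{lem:MatchingAlwaysPossible} on each side for sufficiency, and for necessity observe that the side $\p{i+1}{j-1}$ is internally matched hence balanced. Your write-up is slightly more explicit about why the combined matching is non-crossing, but the argument is the same.
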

	\begin{proof}
		If $(i,j)$ is feasible, then $i$ and $j$ have different colors. Also, there is a matching that contains the pair $(i,j)$, and at the same time the set $\p{i+1}{j-1}$, containing all points on one side of the line $ij$, is matched. Then $\p{i+1}{j-1}$ must be balanced, so $\p{i}{j}$ is balanced as well.
		
		On the other hand, if $i$ and $j$ are of different colors and $\p{i}{j}$ is balanced, then both $\p{i+1}{j-1}$ and $\p{j+1}{i-1}$ are also balanced. Thus we can match $i$ with $j$, and Lemma~\ref{lem:MatchingAlwaysPossible} ensures that each of the sets $\p{i+1}{j-1}$ and $\p{j+1}{i-1}$ can be matched. Clearly, the obtained matching remains crossing-free.
	\end{proof}
	
	The statement of Lemma~\ref{lem:PairFeasible} is quite simple, and we will apply it on many occasions. To avoid its numerous mentions that could make some of our proofs unnecessarily cumbersome, from now on we will use it without explicitly stating it.
	
	The rest of the paper is organized as follows. In Section~\ref{sec:Orbits} we formally define orbits and derive numerous properties that hold for them. We note the existence of a structured relationship between orbits. This leads us to the definition of orbit graphs for which we show certain properties. In Section~\ref{sec:BottleneckMatchingsB} we construct an efficient algorithm for finding a bottleneck matching of points in convex position. For this we follow the general idea from \cite{savic2017faster}, but now we use orbits and their properties for the proofs. In Section~\ref{sec:Circle} we again use properties of orbits and orbit graph to solve the problem of finding a bottleneck matching for points on a circle in $O(n)$ time.

	\section{Orbits and their properties}
	\label{sec:Orbits}

	\begin{definition}[Functions $o^+$ and $o^-$]
	By $o^+, o^- : P \rightarrow P$ we denote functions, such that $o^+(i)$ is the first point $j$ starting from $i$ in the positive direction with $(i,j)$ being feasible, and $o^-(i)$ is the first point $j$ starting from $i$ in the negative direction with $(i,j)$ being feasible.
	\end{definition}
	
	As $P$ is balanced, Lemma~\ref{lem:MatchingAlwaysPossible} guarantees that both $o^+$ and $o^-$ are well-defined.
	
	

	\begin{proposition}
		\label{prp:OGoingIn}
		If a set $\pij$ is such that the number of points in $\pij$ of the same color as $i$ is not larger than the number of points of the other color, then $o^+(i) \in \p{i+1}{j}$.
		
		If a set $\pij$ is such that the number of points in $\pij$ of the same color as $j$ is not larger than the number of points of the other color, then $o^-(j) \in \p{i}{j-1}$.
	\end{proposition}
	\begin{proof}
		W.l.o.g.~assume that $i$ is red. We observe the difference between the number of red points and the number blue points in $\p{i}{k}$, as $k$ goes from $i$ to $j$. In the beginning, when $k = i$, this difference is $1$, and at the end, when $k = j$ the difference is at most 0. In each step this difference changes by $1$, so the first time this difference is $0$, the point $k$ must be blue. This is the first time the set $\p{i}{k}$ is balanced, and hence $o^+(i) = k \in \p{i+1}{j}$.
		
		The second part of the proposition is proven analogously.
	\end{proof}
	
	A straightforward consequence of Proposition~\ref{prp:OGoingIn} follows.
	
	\begin{proposition}
		\label{prp:BalancedO}
		If $\pij$ is balanced, then $o^+(i) \in \p{i+1}{j}$ and $o^-(j) \in \p{i}{j-1}$. $\hfill \Box$
	\end{proposition}
	
	The next proposition establishes the connection between $o^+$ and $o^-$.
	
	\begin{proposition} \label{prp:O-1BijectionInverse}
		Functions $o^+$ and $o^-$ are bijective, and they are inverses of each other.
	\end{proposition}
	\begin{proof}
		It is enough to prove that, for all $i \in P$, we have $o^+(o^-(i)) = i$ and $o^-(o^+(i)) = i$.
		
		Let $j = o^+(i)$ and $k = o^-(j)$. Suppose that $i \neq k$. By definition of $o^+$, the set $\pij$ is balanced, so by Proposition~\ref{prp:BalancedO} we have that $k \in \p{i}{j-1}$. On the other hand, by definition of $o^-$, the set $\p{k}{j}$ is also balanced, so $\p{i}{k-1}$ must be balanced as well. But this means, again by Proposition~\ref{prp:BalancedO}, that $o^+(i) \in \p{i+1}{k-1}$, which is a contradiction. Hence, $o^-(o^+(i)) = i$. The claim that $o^+(o^-(i)) = i$ is proven analogously.
	\end{proof}
	
	This proposition allows us that instead of using both $o^+$ and $o^-$, we just use the function $o := o^+$, along with the standard notation $o^k, k \in \mathbb{Z}$, to represent the repeated composition of $o$ with itself, where $o$ appears $k$ times. By definition, $o^0$ is the identity function on $P$, and expressions with a negative integer in the exponent are evaluated using the inverse function, $o^{-k} = (o^-)^k$.

	Now we are ready to define orbits.
	
	\begin{definition}[Orbit]
		The \emph{orbit} of $i$, denoted by $\OO{i}$, is defined by $\OO{i} := \{o^k(i) : k \in \mathbb{Z}\}$. By $\OO{P}$ we denote the set of all orbits of the set $P$ of points in convex position, that is, $\OO{P} := \{\OO{i} : i \in P\}$.
	\end{definition}
	
	\fig{fig:Orbits}{Orbits -- an example.}{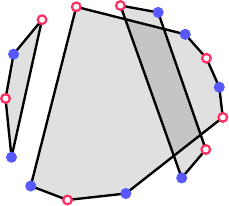}%
	An example of a balanced 2-colored set of points in convex position, along with its set of orbits can be found in Figure~\ref{fig:Orbits}. Note that from the definition of orbits it is clear that for each $j \in \OO{i}$ we have $\OO{j} = \OO{i}$, and thus the set of all orbits is a \emph{partition} of the set of all points.
	
	It is not hard to convince oneself that the number of orbits can be anything from $1$, when colors alternate, as in Figure~\ref{fig:OrbitsNumber}(a), to $n/2$, when points in each color group are consecutive, as in Figure~\ref{fig:OrbitsNumber}(b).
	
	\figX{fig:OrbitsNumber}{(a) One orbit of size $2n$.(b) $n$ orbits of size $2$.}{%
		\parbox{4cm}{\centering\includegraphics{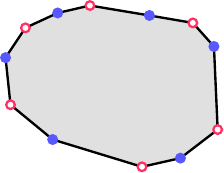}\\(a)}
		\hspace{1cm}%
		\parbox{4cm}{\centering\includegraphics{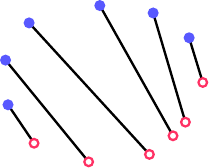}\\(b)}
	}

	Next, we prove a number of properties of orbits.
	
	The first proposition provides a simple characterization of a feasible pair via orbits, which is essential for our further application of orbits.
	
	\begin{proposition}
		\label{prp:OrbitsFeasible}
		Points $i$ and $j$ form a feasible pair if and only if they have different colors and $\OO{i} = \OO{j}$.
	\end{proposition}
	\begin{proof}
		First, suppose that $i$ and $j$ have different colors and belong to the same orbit. Then $j = o^s(i)$, where $s$ is odd (as $i$ and $j$ have different colors). For each $r \in \{0, \ldots, s-1\}$, the pair $(o^r(i), o^{r+1}(i))$ is feasible so $\p{o^r(i)}{o^{r+1}(i)}$ is balanced. This, together with the fact that the sequence $o^0(i), o^1(i), ..., o^s(i)$ alternates between red and blue points, implies that $\pij$ is balanced as well, that is, the pair $(i,j)$ is feasible.
		
		\fig{fig:OrbitsFeasibleProof}{Illustrating the proof of Proposition~\ref{prp:OrbitsFeasible}.}{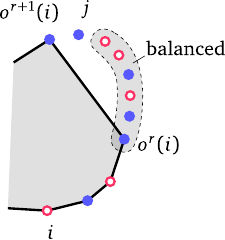}%
		Next, let $(i,j)$ be a feasible pair, where, say, $i$ is red and $j$ is blue. Suppose for a contradiction that $i$ and $j$ belong to different orbits. Let $r$ be such that $j \in \p{o^r(i)+1}{o^{r+1}(i)-1}$, see Figure~\ref{fig:OrbitsFeasibleProof}. W.l.o.g.~suppose that $o^r(i)$ is blue (the other case is symmetrical with respect to the direction around $P$). Since both $(i, o^r(i))$ and $(i, j)$ are feasible pairs, then $\p{o^r(i) + 1}{j}$ is balanced. The points $o^r(i)$ and $j$ are of the same color, so $\p{o^r(i)}{j-1}$ is also balanced. However, Proposition~\ref{prp:BalancedO} implies that $o^{r+1}(i) = o(o^r(i)) \in \p{o^r(i)+1}{j-1}$, which contradicts the choice of $r$.
	\end{proof}
	
	The following proposition discusses the way a feasible pair divides an orbit, whether it belongs to it or not.
	
	\begin{proposition}
		\label{prp:FeasibleSplitsBalanced}
		A feasible pair divides points of \emph{any} orbit into two balanced parts.
	\end{proposition}
	\begin{proof}
		Let $(i, j)$ be a feasible pair and let $\OA$ be an orbit. By Proposition~\ref{prp:OrbitsFeasible} points can be matched only within their orbit, so if $\pij \cap \OA$ is not balanced, then it is not possible to complete a matching containing $(i, j)$ which contradicts $(i,j)$ being feasible.
	\end{proof}

	Informally speaking, the following proposition ensures that by repeatedly applying function $o$, we follow the points of an orbit as they appear on $\mathcal{P}$, thus visiting \emph{all} the points of the orbit in a \emph{single} turn around the polygon.
	
	\begin{proposition}
		\label{prp:SingleTurn}
		For every point $i$, no point of $\OO{i}$ lies between $i$ and $o(i)$, that is, $\p{i}{o(i)} \cap \OO{i} = \{i,o(i)\}$.
	\end{proposition}
	\begin{proof}
		Suppose there is a point $j \in \OO{i}$ such that $j \in \p{i}{o(i)} \setminus \{i,o(i)\}$. The colors of $i$ and $o(i)$ are different, so the color of $j$ is either different from $i$ or from $o(i)$.
		
		If $i$ and $j$ have different colors, knowing that they belong to the same orbit, by Proposition~\ref{prp:OrbitsFeasible} the pair $(i,j)$ is feasible, which contradicts $o(i)$ being the first point from $i$ in the positive direction such that $(i,o(i))$ is feasible.
		
		The case when $o(i)$ and $j$ have different colors is treated analogously.
	\end{proof}
	
	The following two propositions are simple consequences of the previous statement.
	
	\begin{proposition}
		\label{prp:NeighboringDifferentColors}
		Any two neighboring points in an orbit have different colors.
	\end{proposition}
	\begin{proof}
		From Proposition~\ref{prp:SingleTurn} we have that if $i$ and $j$ are neighboring points on an orbit, then either $j = o(i)$ or $i = o(j)$. By the definition of the function $o$, this means that $i$ and $j$ have different colors.
	\end{proof}
	
	\begin{proposition}
		\label{prp:OrbitBalanced}
		Every orbit is balanced.
	\end{proposition}
	\begin{proof}
		This follows directly from Proposition~\ref{prp:NeighboringDifferentColors}.
	\end{proof}
	
	Next, we discuss a structural property of two different orbits.
	
	\begin{proposition}
		\label{prp:NeighborsSameColor}
		Let $i$ and $j$ be points from two different orbits such that there are no other points from their orbits between them, that is, $\pij \cap \OO{i} = i$ and $\pij \cap \OO{j} = j$. Then, $i$ and $j$ have the same color.
	\end{proposition}
	\begin{proof}
		Suppose for a contradiction that $i$ and $j$ have different colors, say, $i$ is blue and $j$ is red. Since they are not from the same orbit, by Proposition~\ref{prp:OrbitsFeasible} the pair $(i,j)$ is not feasible. Thus, $\pij$ is not balanced, so it is either red-heavy or blue-heavy.
		
		If it is red-heavy, then by Proposition~\ref{prp:OGoingIn} we have $o(i) \in \p{i+1}{j}$, which contradicts $\pij \cap \OO{i} = i$.
		
		If $\pij$ is blue-heavy, then, again by Proposition~\ref{prp:OGoingIn}, $o^{-1}(j) \in \p{i}{j-1}$, which contradicts $\pij \cap \OO{j} = j$.
	\end{proof}

	Moving on to the algorithmic part of the story, we show that we can efficiently compute all the orbits, or more precisely -- all the values of the function $o$.
	
	\begin{lemma}
		\label{lem:OrbitsComplexity}
		The function $o(i)$, for all $i$, can be computed in $O(n)$ time.
	\end{lemma}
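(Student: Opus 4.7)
The plan is to reduce the computation of all values of $o$ to a ``next-occurrence-of-the-same-value'' problem on a cyclic integer sequence, which admits a simple linear-time solution via bucket sort.

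For $k \in \{-1, 0, \ldots, 2n-1\}$, let $h(k)$ denote the number of red points minus the number of blue points among $v_0, v_1, \ldots, v_k$, with $h(-1) := 0$. Since $P$ is balanced we also have $h(2n-1) = 0$, so the indices can be treated cyclically with $h(-1) \equiv h(2n-1)$, and every value of $h$ lies in $\{-n, \ldots, n\}$. By definition, $\p{i}{j}$ is balanced if and only if $h(j) = h(i-1)$. I claim that $o(i)$ is precisely the first index $j$ encountered when scanning in the positive cyclic direction from $i$ that satisfies $h(j) = h(i-1)$.

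The color condition required by Lemma~\ref{lem:PairFeasible} comes for free from this first-occurrence property. If $v_i$ is red, then $h(i) = h(i-1) + 1$; since $j$ is the first re-occurrence of the value $h(i-1)$, we must have $h(k) > h(i-1)$ for every $i \le k < j$, and because $h$ moves by $\pm 1$ this forces $h(j-1) = h(i-1) + 1$, which means $v_j$ is blue. The case $v_i$ blue is symmetric, and the wrap-around corner case, in which the value $h(i-1)$ appears only once along the polygon (so $j = i-1$ after a full loop), is handled by the same parity argument on the $\pm 1$ walk. Combined with the balance condition and Lemma~\ref{lem:PairFeasible}, this identifies $(i, j)$ as feasible, and minimality of $j$ yields $j = o(i)$.

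Given the reduction, the algorithm is straightforward: (1) compute $h(k)$ for all $k$ in one sweep; (2) allocate an array of buckets indexed by $v \in \{-n, \ldots, n\}$ and, scanning $k$ from $0$ to $2n-1$, append $k$ to bucket $h(k)$, so each bucket collects its positions in increasing order; (3) for each bucket, link every entry to its successor, wrapping the last entry to the first to cover going around the polygon, producing a table $\mathrm{next}(k)$; (4) set $o(i) := \mathrm{next}((i-1) \bmod 2n)$. All four steps run in $O(n)$ time and space. The main obstacle is really only the color-matching argument in the previous paragraph --- one has to be sure that the first balanced interval automatically has opposite-colored endpoints, including in the wrap-around case --- after which the linear-time implementation is routine.
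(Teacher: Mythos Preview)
Your argument is correct. Your reduction of ``first balanced interval'' to ``next occurrence of the same prefix-sum value'' is sound, and the $\pm 1$-walk observation that the first return to $h(i-1)$ must approach from above (resp.\ below) when $v_i$ is red (resp.\ blue) does indeed force the opposite colour at the other endpoint, including in the degenerate case where $h(i-1)$ is a unique extremum and the next occurrence is $i-1$ itself after a full cycle.

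The paper's proof takes a different route. It first locates an index $i_0$ minimising the prefix sum, so that every prefix starting from $i_0$ is red-heavy or balanced; then it runs the classical stack-based parenthesis-matching scan from $i_0$, pushing red points and popping on blue points to read off $o$ on all red points, and finally repeats with colours swapped. Your approach avoids both the search for a safe starting point and the second colour-swapped pass: a single bucketing of prefix-sum values yields $o(i)$ for all $i$ at once. Conversely, the paper's stack argument makes the ``opposite colour at the first balanced position'' fact implicit in the push/pop mechanics, whereas you have to argue it explicitly via the $\pm 1$ walk. Both are clean $O(n)$ arguments; yours is a bit more uniform, the paper's is the more traditional balanced-parentheses idiom.
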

	\begin{proof}
		The goal is to find $o(i)$ for each $i \in P$. We start by showing that there is an $i_0 \in P$ such that for every $j \in P$, we have that $\p{i_0}{j}$ is either balanced or red-heavy, and that we can compute such $i_0$ in $O(n)$ time.
		
		We define $z_i$ to be the number of red points minus the number of blue points in $\p{0}{i-1}$. All these values can be computed in $O(n)$ time, since $z_{i} = z_{i-1} \pm 1$, where we take the plus sign if the point $i-1$ is red, and the minus sign if it is blue. For $i_0$ we take $i$ for which $z_i$ is minimum, breaking ties arbitrarily. It is straightforward to check that the above condition is satisfied: if there were a $j$ such that $\p{i_0}{j}$ is blue-heavy, then $z_j$ would have been less than $z_{i_0}$, which is impossible due to the way we selected $i_0$.
		
		To compute the function $o$ in all the red points, we run the following algorithm.
		
		\algo{
			\State Find $i_0$ as described.
			\State Create new empty stack $\mathcal{S}$.
			\For {$i \in \p{i_0}{i_0-1}$}
			\If {$i \in R$}
			\State $\mathcal{S}.Push(i)$
			\Else
			\State $j \leftarrow \mathcal{S}.Pop()$
			\State $o(j) \leftarrow i$
			\EndIf
			\EndFor
		}
		
		The way $i_0$ is chosen guarantees that for every $j \in P$, the number of blue points in the set $\p{i_0}{j}$ is at most the number of red points in the same set, i.e.,~the set is either balanced or red-heavy. This ensures that the stack will never be empty when Pop operation is called. When $o(j)$ is assigned, the point $j$ is the last on the stack because each red point that came after $j$ is popped when its blue pair is encountered, meaning that $\p{j}{i}$ is balanced. Moreover, this is the first time such a situation happens, so the assignment $o(j) = i$ is correct.
		
		By running this algorithm we computed the function $o$ in all red points. To compute it in blue points as well, we run an analogous algorithm where the color roles are swapped.
		All the parts of this process run in $O(n)$ time, so the function $o$ and, thereby, all orbits, are computed in $O(n)$ time as well.
	\end{proof}

	We define two categories of feasible pairs according to the relative position within their orbit.
	
	\begin{definition}[Edge, Diagonal]
		We call a feasible pair $(i,j)$ an \emph{edge} if and only if $i = o(j)$ or $j = o(i)$; otherwise, it is called a \emph{diagonal}.
	\end{definition}
	
	In other words, pairs consisting of two neighboring vertices of an orbit are edges, and all other feasible pairs are diagonals. Note that edges are not necessarily neighboring vertices in $P$.
	

	\begin{proposition}
		\label{prp:MatchingWithEdgesBalanced}
		If $\pij$ is balanced, then points in $\pij$ can be matched using edges only.
	\end{proposition}
	\begin{proof}
		We prove this by induction on the size of $\pij$. The statement obviously hold for the base case, where $j = i+1$, since $(i,i+1)$ itself must be an edge.
		
		Let us assume that the statement is true for all balanced sets of points of size less than $r$, and let $|\pij| = r$. Proposition~\ref{prp:BalancedO} implies that $o(i) \in \pij$. We construct a matching on $\pij$ by taking the edge $(i,o(i))$, and edge-only matchings on $\p{i+1}{o(i)-1}$ and $\p{o(i)+1}{j}$, which are provided by the induction hypothesis.
	\end{proof}

	When we speak about edges, we consider them as ordered pairs of points, so that the edge $(i, o(i))$ is considered to be directed from $i$ to $o(i)$.  We say that points $\p{i}{o(i)} \setminus \{i,o(i)\}$ lie on the right side of that edge, and points $\p{o(i)}{i} \setminus \{i, o(i)\}$ lie on its left side. Directionality and coloring together imply two possible types of edges, as the following definition states.
	
	\begin{definition}[Red-blue edge, Blue-red edge]
		We say that $(i,o(i))$ is a \emph{red-blue} edge if $i \in R$, and \emph{blue-red} edge if $i \in B$.
	\end{definition}
	
	Note that sometimes an orbit comprises only two points, in case when $o(o(i)) = i$; we think of it as if it has two edges, $(i, o(i))$ and $(o(i),i)$, one being red-blue and the other being blue-red.
	
	\begin{proposition}
		\label{prp:NoEdgesOfTheSameTypeCross}
		Two edges of the same type (both red-blue, or both blue-red) from different orbits do not cross.
	\end{proposition}
	
	
	\begin{proof}
		Let $(i, o(i))$ and $(j,o(j))$ be two edges of the same type, and $\OO{i} \neq \OO{j}$. Suppose, for a contradiction, that these edges cross, then we either have $j \in \p{i}{o(i)}$ or $i \in \p{j}{o(j)}$.
		
		W.l.o.g.~we can assume that $j \in \p{i}{o(i)}$. Then there are no points from $\OO{i} \cup \OO{j}$ in $\p{j}{o(i)}\setminus\{j,o(i)\}$, and Proposition~\ref{prp:NeighborsSameColor} implies that points $o(i)$ and $j$ have the same color, but this contradicts the assumption that $(i,o(i))$ and $(j,o(j))$ are of the same type.
	\end{proof}

	\begin{proposition}
		\label{prp:OrbitInteractionParity}
		For every two orbits $\OA, \OB \in \OO{P}$, $\OA \neq \OB$, either all points of $\OB$ are on the right side of red-blue edges of $\OA$, or all points of $\OB$ are on the right side of blue-red edges of $\OA$.
	\end{proposition}

	\begin{proof}
		\fig{fig:OrbitInteractionParityProof}{Illustrating the proof of Proposition~\ref{prp:OrbitInteractionParity}.}{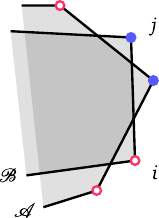}%
		Suppose for a contradiction that there are two points from $\OB$, one on the right of a red-blue edge of $\OA$, and the other on the right of a blue-red edge of $\OA$, see Figure~\ref{fig:OrbitInteractionParityProof}. Let $i$ and $j$ be two such points with no other points from $\OB$ in $\pij$ (we can always find such a pair, since each point of $\OB$ is either behind a red-blue edge, or behind a blue-red edge of $\OA$). Then, $(i,j)$ is an edge of $\OB$ which crosses both a red-blue edge and a blue-red edge of $\OA$, which contradicts Proposition~\ref{prp:NoEdgesOfTheSameTypeCross}.
	\end{proof}
	
	The following proposition tells us about how the orbits are mutually synchronized.
	
	\begin{proposition}
		\label{prp:OrbitSynchronicity}
		Let $\OA, \OB \in \OO{P}$. There are no points of $\OB$ on the right side of red-blue edges of $\OA$ if and only if there are no points of $\OA$ on the right of blue-red edges of $\OB$.
	\end{proposition}
	\begin{proof}
		\fig{fig:OrbitSynchronicityProof}{Illustrating the proof of Proposition~\ref{prp:OrbitSynchronicity}.}{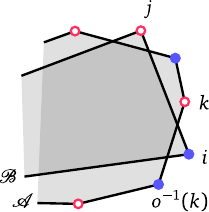}%
		If $\OA = \OB$ this is trivially true.
		
		Assume that there is no point of $\OB$ on the right side of a red-blue edge of $\OA$. Suppose for a contradiction that there is a blue-red edge $(i,j)$ of $\OB$ such that there are points of $\OA$ on its right side, see Figure~\ref{fig:OrbitSynchronicityProof}. Let $k$ be the first point from $\OA$ in $\pij$, observed in the positive direction around $\mathcal{P}$ starting from $i$. It must be red, otherwise point $i$ of $\OB$ would be on the right side of the red-blue edge $(o^{-1}(k),k)$ of $\OA$.
		But now, $i \in \OB$ is blue and $k \in \OA$ is red, and no points of $\OA\cup \OB$ are in $\p{i}{k}$ other than $i$ and $k$, which contradicts Proposition~\ref{prp:NeighborsSameColor}.
		
		The other direction is proven analogously.
	\end{proof}

	\begin{definition}[Relation $\preceq$]
		We define relation $\preceq$ on $\OO{P}$ by setting $\OA \preceq \OB$ if and only if there are no points of $\OB$ on the right sides of red-blue edges of $\OA$ (which, by Proposition~\ref{prp:OrbitSynchronicity}, is equivalent to no points of $\OA$ being on the right sides of blue-red edges of $\OB$).
	\end{definition}

	\begin{proposition}
		\label{prp:PrecInTotalOrder}
		The relation $\preceq$ on $\OO{P}$ is a total order.
	\end{proposition}
	\begin{proof}
		For each $\OA, \OB \in \OO{P}$, the following holds.
		
		\textbf{Totality.}
		$\OA \preceq \OB$ or $\OB \preceq \OA$.
		
		If $\OA = \OB$ this is trivially true. Suppose $\OA \preceq \OB$ does not hold. Because of Proposition~\ref{prp:OrbitInteractionParity}, no points of $\OB$ are on the right side of blue-red edges of $\OA$, so $\OB \preceq \OA$, by the definition of the relation $\preceq$.
		
		\textbf{Antisymmetry.}
		If $\OA \preceq \OB$ and $\OB \preceq \OA$, then $\OA = \OB$.
		
		From $\OA \preceq \OB$ we know that no points of $\OA$ are on the right side of blue-red edges of $\OB$. But, since  $\OB \preceq \OA$, there are no points of $\OA$ on the right side of red-blue edges of $\OB$, either. This is only possible if $\OA = \OB$.
		
		\textbf{Transitivity.}
		If $\OA \preceq \OB$ and $\OB \preceq \OC$, then $\OA \preceq \OC$.
		
		If $\OA \preceq \OB$ then all red-blue edges of $\OA$ must lie on the right side of red-blue edges of $\OB$, because no red-blue edges of $\OA$ can cross a red-blue edge of $\OB$ (Proposition~\ref{prp:NoEdgesOfTheSameTypeCross}) and there are no points of $\OA$ on the right side of blue-red edges of $\OB$. But, since $\OB \preceq \OC$, there are no points of $\OC$ right of red-blue edges of $\OB$, so no point of $\OC$ can be on the right side of some red-blue edge of $\OA$. Hence, $\OA \preceq \OC$.
	\end{proof}

	\begin{proposition}
		\label{prp:Consecutive}
		Let $\OA$ and $\OB$, $\OA \preceq \OB$, be two consecutive orbits in the total order of orbits, that is, there is no $\OL$ different from $\OA$ and $\OB$, such that $\OA \preceq \OL \preceq \OB$. If $i$ and $j$ are two points, one from $\OA$ and the other from $\OB$ such that there are no points from $\OA$ or $\OB$ in $\pij$ other than $i$ and $j$, then $i$ and $j$ are two consecutive points on $\mathcal{P}$.
		
		The converse also holds, for any two consecutive points $i$ and $i+1$ in $P$ which belong to different orbits, orbits $\OO{i}$ and $\OO{i+1}$ are two consecutive orbits in the total order of orbits.
	\end{proposition}

	Note that Proposition~\ref{prp:NeighboringDifferentColors} and Proposition~\ref{prp:NeighborsSameColor} ensure that two consecutive points in $P$ belong to different orbits if and only if they have the same color.
	
	\begin{proof}(of Proposition~\ref{prp:Consecutive})
		\fig{fig:ConsecutiveProof}{Illustrating the proof of Proposition~\ref{prp:Consecutive}.}{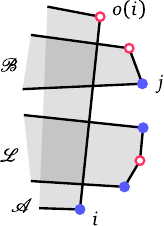}%
		Assume that $i \in \OA$ and $j \in \OB$ are two points such that $\OA \cap \pij = \{i\}$ and $\OB \cap \pij = \{j\}$, see Figure~\ref{fig:ConsecutiveProof}. (The case when $i \in \OB$ and $j \in \OA$ is proven analogously.)
		
		Points $i$ and $j$ must have the same color, by Proposition~\ref{prp:NeighborsSameColor}. Since $j$ is on the right side of the edge $(i,o(i))$ and $\OA \preceq \OB$, that edge must be blue-red, so both $i$ and $j$ are blue.
		
		Suppose that there is an orbit $\OL$ with points in $\p{i+1}{j-1}$. But then, those points are on the right side of the blue-red edge $(i,o(i))$ and on the right side of the red-blue edge $(o^{-1}(j),j)$, that is, $\OA \preceq \OL$ and $\OL \preceq \OB$, a contradiction.
		
		To show the converse statement, assume that points $i$ and $i+1$ belong to different orbits. W.l.o.g., assume $\OO{i} \preceq \OO{i+1}$. If there is an orbit $\OL$ different from both $\OO{i}$ and $\OO{i+1}$, such that $\OO{i} \preceq \OL \preceq \OO{i+1}$, then $i$ would lie on the right side of red-blue edges of $\OL$, and no points of $\OO{i+1}$ would lie on the right side of red-blue edges of $\OL$. But, this is not possible since the position of points $i$ and $i+1$ must be the same relative to any edge containing neither $i$ nor $i+1$.
	\end{proof}

	\subsection{Orbit graphs}
	\label{sec:OrbitGraphs}
	
	\begin{definition}[Orbit graph]
		\emph{Orbit graph} $\GR(P)$ is a directed graph whose vertex set is the set of orbits $\OO{P}$, and there is an arc from an orbit $\OA$ to an orbit $\OB$, that is, $(\OA, \OB) \in E(\GR(P))$, if and only if $\OA$ and $\OB$ cross each other and $\OA \preceq \OB$.
	\end{definition}
	\begin{proposition}
		\label{prp:ForbidenInducedSubgraph}
		Let $\OA, \OB, \OC \in \OO{P}$. If both $(\OA,\OB)$ and $(\OA,\OC)$ are arcs of $\GR(P)$, or both $(\OB,\OA)$ and $(\OC,\OA)$ are arcs of $\GR(P)$, then either $(\OB,\OC)$ is an arc of $\GR(P)$, or $(\OC,\OB)$ is an arc of $\GR(P)$.
	\end{proposition}
	\begin{proof}
		\fig{fig:ForbidenInducedSubgraphProof}{Illustrating the proof of Proposition~\ref{prp:ForbidenInducedSubgraph}.}{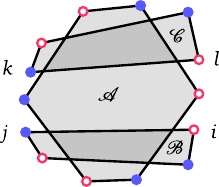}%
		Assume that in $\GR(P)$ there is an arc between $\OA$ and $\OB$, an arc between $\OA$ and $\OC$, but no arc between $\OB$ and $\OC$. By definition, $\OA$ crosses both $\OB$ and $\OC$, and $\OB$ and $\OC$ do not cross, as illustrated in Figure~\ref{fig:ForbidenInducedSubgraphProof}. Then, there is an edge $(i,j)$ of $\OB$ such that the whole $\OC$ lies on its right side, and there is an edge $(k,l)$ of $\OC$ such that the whole $\OB$ lies on its right side.
		
		From Proposition~\ref{prp:NeighborsSameColor} we know that points $i$ and $l$ must be of the same color. Therefore, edges $(i,j)$ and $(k,l)$ are of different types. Orbit $\OA$ crosses both $\OB$ and $\OC$, so it must cross both $(i,j)$ and $(k,l)$. If $\OA \preceq \OB$ then $(i,j)$ must be red-blue, since there are points of $\OA$ on the right side of $(i,j)$, and thus $(k,l)$ is blue-red. But there are also points of $\OA$ on the right side of $(k,l)$, so $\OC \preceq \OA$. Analogously, If $\OB \preceq \OA$, then $\OA \preceq \OC$.
		
		Hence, if both $\OA \preceq \OB$ and $\OA \preceq \OC$ or both $\OB \preceq \OA$ and $\OC \preceq \OA$, then $\OB$ and $\OC$ must cross.
	\end{proof}

	\begin{proposition}	
		\label{prp:connectedInBetween}
		Let $\OA, \OB, \OC \in \OO{P}$, and $\OA \preceq \OB \preceq \OC$. If $(\OA, \OC) \in E(\GR(P))$, then $(\OA, \OB)\in E(\GR(P))$ and $(\OB, \OC) \in E(\GR(P))$.
	\end{proposition}
	
	\begin{proof}
		From Proposition~\ref{prp:ForbidenInducedSubgraph} we know that if either $(\OA,\OB)$ or $(\OB,\OC)$ is an arc of $\GR(P)$, then both of them must be. Therefore, let us assume for the sake of contradiction that neither $(\OA,\OB)$ nor $(\OB,\OC)$ is an arc of $\GR(P)$. That means that $\OA$ and $\OC$ intersect, but $\OA$ and $\OB$ do not intersect and $\OB$ and $\OC$ do not intersect.
		
		Since $\OA \preceq \OB$, and $\OA$ and $\OB$ do not intersect, there is a red-blue edge of $\OB$ such that all the points of $\OA$ are on its right side. Similarly, since $\OB \preceq \OC$, and $\OB$ and $\OC$ do not intersect, there is a blue-red edge of $\OB$ such that all the points of $\OC$ are on its right side. This contradicts the requirement that $\OA$ and $\OC$ intersect.
	\end{proof}
	
	\begin{proposition}
		\label{prp:segmentedStructure}
		Let $\OA, \OB, \OC, \OD \in \OO{P}$ such that $\OA \preceq \OB \preceq \OC \preceq \OD$. If $(\OA, \OD) \in E(\GR(P))$, then $(\OB, \OC) \in E(\GR(P))$.
	\end{proposition}
	
	\begin{proof}
		We apply Proposition~\ref{prp:connectedInBetween} once to $\OA$, $\OC$ and $\OD$ to conclude that $(\OA, \OC) \in E(\GR(P))$, and than again to $\OA$, $\OB$ and $\OC$ to conclude that $(\OB, \OC) \in E(\GR(P))$.
	\end{proof}
	
	The previous two propositions imply that all orbits ``under'' any arc form (an orientation of) a clique.

	\begin{definition}[Segmented graph, Segment]
		We say that a directed graph $G$ is \emph{segmented} if its vertices can be labeled $v_0, v_1, \ldots, v_{m-1}$ so that $(v_i,v_j) \in E(G)$ implies $i \leq j$ and $(v_s,v_t) \in E(G)$ for all $0 \leq i \leq s \leq t \leq j < m$.
		
		For a segmented graph $G$, an arc $(v_s,v_t) \in E(G)$ is called a \emph{segment} if there is no arc $(v_i,v_j) \in E(G)$ other than $(v_s, v_t)$ such that $i \leq s \leq t \leq j$.
	\end{definition}
	
	Notice that any segmented graph is fully defined by the enumeration of its vertices and the set of all of its segments.
	
	Proposition~\ref{prp:segmentedStructure} tells us that any orbit graph is segmented. Interestingly, the converse is also true, as we will show next.
	
	\begin{theorem}[A characterization of orbit graphs]
		\label{prp:segmentedEqOrbitGraph}
		
		A directed graph $G$ is the orbit graph of some bichromatic set of points in convex position if and only if it is segmented.
	\end{theorem}
	
	\begin{proof}
		The ``only if'' part follows from Proposition~\ref{prp:segmentedStructure}.
		
		For the ``if'' part, let $G$ be a segmented graph with the vertex set $\{ v_0, v_1, \ldots, v_{m-1}\}$, and segments $(v_{s_k}, v_{t_k})$, for $k \in \{0, \ldots, Z-1 \}$, where $Z$ is the number of segments, and the sequence $s$ is increasing, that is, $s_e < s_f$ for $0 \leq e < f < Z$. The sequence $t$ must be increasing as well, as by the definition of a segment, no segment can be "nested" inside another. Note also that no two segments can share the starting vertex, nor the ending vertex.
		
		Our goal is to construct a set $P$ of $2n$ bichromatic points in convex position, for a suitable integer $n$, whose orbit graph is isomorphic to $G$. This will be done in several steps, what follows is the outline of the rest of the proof.
		
		We will start by constructing an auxiliary sequence $Q$ of vertices of $G$ of length $2n$, in which each vertex of $G$ appears at least once. Then, using $Q$, we will construct a sequence $C$ of colors (\emph{red} or \emph{blue}) of length $2n$. These will be the colors of $P$, in sequence around the convex hull. Note that this sequence of points will have $2n$ elements, same as $Q$, giving the obvious bijective correspondence between the points in $P$ and the elements of $Q$. To complete the proof, we first need to show that two points of $P$ belong to the same orbit if and only if their corresponding elements of $Q$ are the same. Then, once we prove that two orbits of $P$ cross if and only if their corresponding elements of $Q$ are connected with an arc in $G$, we are done.
		
			\fig{fig:SegmentedExample}{The construction of the sequence $Q$ for a given segmented graph. The example graph is on the left, where only the vertices and the segments are visible (other arcs are not shown). If $Q_i = v_j$, a point is plotted at the position $(i,j)$ with color $C_i$.}{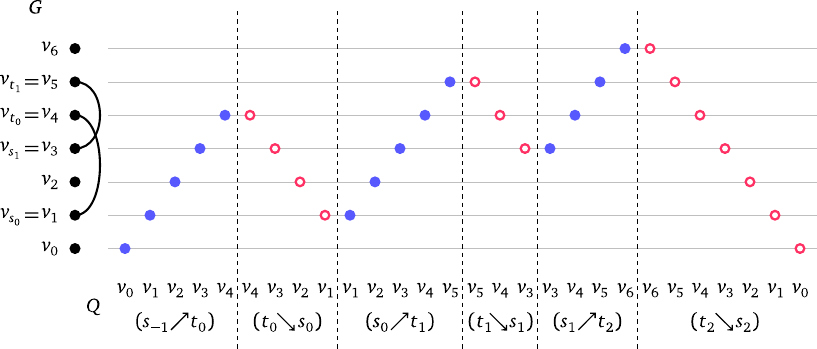}%
		
		For $i \leq j$, we call $(i {\nearrow} j) := ( v_i, v_{i+1}, \ldots, v_{j})$ an \emph{increasing sequence}, and $(j {\searrow} i) := ( v_j, v_{j-1}, \ldots, v_{i})$ a \emph{decreasing sequence}.	
		The sequence $Q$ of vertices of $G$ is constructed concatenating increasing and decreasing sequences as follows
		$$Q = \bigoplus_{e=0}^{Z} \big( (s_{e-1}{\nearrow}{t_e}) \oplus (t_{e}{\searrow}{s_e}) \big),$$
		where the symbol $\oplus$ represents the sequence concatenation operation, $s_{-1} :=0$, $s_Z := 0$ and $t_Z := m-1$. The construction of the sequence $Q$ is visualized in Figure~\ref{fig:SegmentedExample}. Note that the last element of an increasing (resp.~decreasing) part of $Q$ is always the same as the first element of the following decreasing (resp.~increasing) part. Hence, two consecutive elements of $Q$ are the same if and only if they lie on the ``switch'' from increasing to decreasing, or vice versa. It is straightforward to check that the length of $Q$ is even, and the value of $n$ is set to be equal to the half of the length of $Q$.
		
		The color sequence $C$ is defined by $C_0 = \textit{blue}$, and $C_i = C_{i-1}$ if $Q_{i}\neq Q_{i-1}$, and $C_i = \overline{C_{i-1}}$ if $Q_{i} = Q_{i-1}$, for $i \in \{1, \ldots, 2n-1\}$, where $\overline{c}$ denotes the color other than $c$. This means that the increasing parts of $Q$ are all blue, and the decreasing parts of $Q$ are all red. Hence, we have
		$$C = \bigoplus_{e=0}^{Z} \big( \textit{blue}^{t_e-s_{e-1}+1} \oplus \textit{red}^{t_e-s_e+1} \big). $$
		
		Let $P$ be the set of $2n$ points in convex position, where $i$-th point has the color $c_i$. We proceed to show that two points of $P$ belong to the same orbit if and only if their corresponding elements of $Q$ are the same.
		
		
		Let $I(\ell, color)$ be $1$ if $C_\ell = color$ and $0$ otherwise. Further, we define $I(\ell, color_1, color_2) = I(\ell, color_1)I(\ell+1, color_2)$. Let $\text{cnt}(i,j,color)$, for $0 \leq i < j < 2n$, be the number of points in $\p{i}{j}$ in $color$, that is, $\text{cnt}(i,j,color) = \sum_{\ell=i}^{j}I(\ell,color)$.
		

		Now we can write
		$$\text{cnt}(i,j,blue) = \sum_{\ell=i}^{j-1}\Big(I(\ell,blue,blue) + I(\ell,blue,red)\Big) + I(j,blue) \text{, \quad and}$$
		$$\text{cnt}(i,j,red) = I(i,red) + \sum_{\ell=i}^{j-1}\Big(I(\ell,red,red) + I(\ell,blue,red)\Big) \text{.}$$		

		In case $C_i \neq C_j$, we have $I(j,blue)=I(i,red)$, and
		$$\text{cnt}(i,j,blue) - \text{cnt}(i,j,red) = \sum_{\ell=i}^{j-1}I(\ell,blue,blue) - \sum_{\ell=i}^{j-1}I(\ell,red,red) \text{,}$$
		that is, 
		the difference between the number of blue points and the number of red points equals the difference between the number of blue-blue pairs and and the number of red-red pairs.

		Looking at the indices of the consecutive vertices in $Q$, at each blue-blue pair the vertex index increases by one, and at each red-red pair the vertex index decreases by one. Further on, at blue-red and red-blue pairs the index of the consecutive vertices stays the same. Hence, if $C_i \neq C_j$, we have that $Q_i = Q_j$ if and only if the numbers of blue and red points in $\p{i}{j}$ are equal, which is, by the definition of an orbit, equivalent to $\OO{i} = \OO{j}$.
		
		Moving on to the case $C_i = C_j$, we observe that there is always a $j'$ such that $Q_{j'} = Q_j$ and $C_{j'} \neq C_j$. Our analysis from the previous case readily gives $\OO{j} = \OO{j'}$. Applying the same argument again, now for $i$ and $j'$, we get that $Q_i = Q_{j'}$ if and only if $\OO{i} = \OO{j'}$. This directly implies that $Q_i = Q_{j}$ if and only if $\OO{i} = \OO{j}$.

		Combining the two cases, we have $\OO{i} = \OO{j}$ if and only if $Q_i = Q_{j}$, for all values of $i$ and $j$.

		As each vertex of $G$ participates in $Q$, we established a bijective correspondence between the vertices of $G$ and the orbits of $P$. It is left to show that two orbits of $P$ cross if and only if their corresponding vertices of $G$ are connected with an arc in $G$, which, because $G$ is segmented, is the same as the condition that both vertices lie under the same segment. More precisely, for each $i$ and $j$, where $i < j$, we want to show that orbits corresponding to $v_i$ and $v_j$ cross if and only if there is $k \in \{0, \ldots, Z-1\}$ such that $s_k \leq i < j \leq t_k$.
		
		It is easy to see that two orbits with corresponding vertices $v_i$ and $v_j$ cross if and only if $v_i v_j v_i v_j$ appears as a cyclic subsequence (not necessarily consecutive) of $Q$.
		
		If there is $k \in \{0, \ldots, Z-1\}$ such that $s_k \leq i < j \leq t_k$, then \\
		$v_i \in (s_{k-1}{\nearrow}{t_k})$,\\
		$v_j \in (t_{k}{\searrow}{s_k})$,\\
		$v_i \in (s_{k}{\nearrow}{t_{k+1}})$, and\\
		$v_j \in (t_{Z}{\searrow}{s_Z}) = ((m-1){\searrow}0)$.
		
		All of the subsequences on the right sides are disjoint subsequences of consecutive elements of $Q$, in order as they appear in $Q$, so $v_i v_j v_i v_j$ is indeed a (not necessarily consecutive) subsequence of $Q$, and their corresponding orbits cross.
		
		On the other hand, consider the case when there is no $k \in \{0, \ldots, Z-1\}$ such that $s_k \leq i < j \leq t_k$. Let $\ell$ be the largest index such that $i \in \{s_\ell, \ldots, t_\ell\}$. All appearances of $v_i$ in $Q$ except the last one are in
		$\Big( \bigoplus_{e=0}^{\ell} \big( (s_{e-1}{\nearrow}{t_e}) \oplus (t_{e}{\searrow}{s_e}) \big) \Big) \oplus (s_{\ell}{\nearrow}{t_{\ell+1}})$,
		and all appearances of $v_j$ in $Q$ except the last one are in
		$\Big( \bigoplus_{e=\ell+1}^{Z-1} \big( (s_{e-1}{\nearrow}{t_e}) \oplus (t_{e}{\searrow}{s_e}) \big) \Big) \oplus (s_{Z-1}{\nearrow}{t_Z})$.
		Notice that if there are both $v_i$ and $v_j$ in $(s_{\ell}{\nearrow}{t_{\ell+1}})$, then the appearance of $v_i$ in that part comes before the appearance of $v_j$ in that part, since $i < j$ and the part is increasing. So, all the non-last appearances of $v_i$ come before all the non-last appearances of $v_j$. There is one additional appearance of both $v_i$ and $v_j$ in the last part, $(t_{Z}{\searrow}{s_Z}) = ((m-1){\searrow}0)$, but there $v_j$ comes before $v_i$, since that part is decreasing. Therefore, $v_i v_j v_i v_j$ is not a cyclic subsequence of $Q$, meaning their corresponding orbits do not cross.
	\end{proof}

	The last proposition we state is a consequence of the fact that orbit graphs are segmented. A \emph{weakly connected component} of a directed graph is a connected component of the undirected graph obtained from the directed graph by removing the edge directions.
	
	\begin{proposition}
		\label{prp:HamiltonianPath}
		Each weakly connected component of $\GR(P)$ contains a unique Hamiltonian path.
	\end{proposition}
	\begin{proof}	
		Let $\OL$ and $\OH$ respectively be the lowest and the highest orbit of some weakly connected component of $\GR(P)$, relative to the total order of orbits. By definition, there is an undirected path between $\OL$ and $\OH$. For any two consecutive orbits $\OB$ and $\OC$ in the total order such that $\OL \preceq \OB \preceq \OC \preceq \OH$, there is an arc $(\OA,\OD)$ corresponding to an edge in some undirected path between $\OL$ and $\OH$, so that $(\OA,\OD)$ goes ``over'' $\OB$ and $\OC$. More formally, there are $\OA$ and $\OD$ such that $\OL \preceq \OA \preceq \OB \preceq \OC \preceq \OD \preceq \OH$, and $(\OA,\OD) \in E(\GR(P))$. By a direct application of Proposition~\ref{prp:segmentedStructure} we get that $(\OB, \OC) \in E(\GR(P))$.
		
		Therefore, the sequence of all orbits from $\OL$ to $\OH$, ordered by the relation $\preceq$, makes a Hamiltonian path in this weakly connected component. It is a unique Hamiltonian path since $\GR(P)$ is a subgraph of a total order.
	\end{proof}

	Finally, we show how to efficiently compute the total order and Hamiltonian paths. It will later be needed for constructing efficient algorithms dealing with non-crossing matchings.

	\begin{lemma}
		\label{lem:OrbitGraphComplexity}
		The total order of orbits, and the Hamiltonian paths for all weakly connected components of the orbit graph can be found in $O(n)$ time in total.
	\end{lemma}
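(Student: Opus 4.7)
The plan is to derive everything from one cyclic pass over $P$ after the orbits are known. First I apply Lemma~\ref{lem:OrbitsComplexity} to compute $o(i)$ for every $i$ in $O(n)$ time and then, by traversing the $o$-cycles starting from each unvisited point, assign each $v_i$ an integer identifier of its orbit; this labelling is also $O(n)$.

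To read off the total order $\preceq$, I will use Property~\ref{prp:Consecutive} together with its converse. Properties~\ref{prp:NeighboringDifferentColors} and~\ref{prp:NeighborsSameColor} say that consecutive vertices $v_i$ and $v_{i+1}$ lie in different orbits exactly when they share a color, and in that case their orbits are consecutive in $\preceq$; conversely, every adjacency in $\preceq$ is witnessed by at least one such same-color pair. The direction of $\preceq$ is free: when both $v_i$ and $v_{i+1}$ are red, $v_{i+1}$ lies on the right of the red-blue edge $(v_i, o(v_i))$ of $\OO{v_i}$, which by the definition of $\preceq$ forces $\OO{v_{i+1}} \preceq \OO{v_i}$; symmetrically, when both are blue, $\OO{v_i} \preceq \OO{v_{i+1}}$. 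In a single cyclic scan of $P$ I therefore, for every same-color consecutive pair, set the corresponding $\textit{prev}/\textit{next}$ pointers on the two orbits (all writes are mutually consistent) and simultaneously increment a counter $\tau(\{\OA, \OB\})$ stored in a hash table keyed on unordered orbit pairs. Finding the $\preceq$-minimum (the orbit with no predecessor) and walking the $\textit{next}$ chain then outputs the total order in $O(n)$.

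For the Hamiltonian paths I will use that $\GR(P)$ is a subgraph of the total-order graph, so the unique Hamiltonian path of each weakly connected component (Property~\ref{prp:HamiltonianPath}) coincides with the $\preceq$-ordered sub-sequence of that component. Two consecutive orbits $\OA \preceq \OB$ in the total order cross, that is interleave on $\mathcal{P}$, if and only if the cyclic projection of the polygon walk onto $\OA \cup \OB$ has more than one run of each, which happens iff $\tau(\{\OA, \OB\}) > 2$. The structural fact underpinning the algorithm is that the weakly connected components of $\GR(P)$ are precisely the maximal intervals of the total order along which every consecutive pair crosses; granting this, I walk the total order once, cut it at every consecutive pair with $\tau = 2$, and emit each resulting sub-sequence as the Hamiltonian path of its component, all in $O(n)$ time.

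The main obstacle is justifying that the components are indeed contiguous intervals of the total order. The intuition is that if $\OA \prec \OB$ is a consecutive non-crossing pair in $\preceq$, then by Property~\ref{prp:OrbitInteractionParity} all of $\OB$ is trapped in a single blue-red arc of $\OA$; a careful analysis, combined with Property~\ref{prp:ForbidenInducedSubgraph}, shows that this confinement propagates up the order and rules out any arc of $\GR(P)$ that would bridge such a non-crossing gap. Once this structural step is in place, the claimed $O(n)$ bound follows from the three linear passes described above.
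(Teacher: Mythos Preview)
Your approach is essentially the paper's: compute orbits, scan same-color consecutive pairs on $\mathcal{P}$ to recover the successor relation in $\preceq$ via Property~\ref{prp:Consecutive}, decide for each $\preceq$-adjacent pair of orbits whether they cross, and output the maximal crossing runs as the Hamiltonian paths.

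Two differences are worth noting. First, your crossing test is more elaborate than necessary. The paper observes that for a same-color consecutive pair $i,i+1$, the orbits $\OO{i}$ and $\OO{i+1}$ cross if and only if the two specific edges $(i,o(i))$ and $(o^{-1}(i+1),i+1)$ cross, an $O(1)$ local check; your transition count $\tau>2$ is correct (Property~\ref{prp:Consecutive} guarantees every $\OA\cup\OB$ transition is witnessed by a consecutive pair on $\mathcal{P}$), but the hash table is superfluous: the converse part of Property~\ref{prp:Consecutive} says same-color consecutive pairs only ever join $\preceq$-adjacent orbits, so $\tau$ can be indexed by a single orbit (its value for $\OA$ counts transitions to $\textit{next}(\OA)$), giving worst-case $O(n)$ without hashing.

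Second, the structural claim you single out as ``the main obstacle'' --- that weakly connected components of $\GR(P)$ are exactly the maximal $\preceq$-intervals in which every consecutive pair crosses --- is also used implicitly by the paper (its $succG$ chain construction is correct only under this assumption), and the paper does not justify it either. Your sketch via Property~\ref{prp:OrbitInteractionParity} and Property~\ref{prp:ForbidenInducedSubgraph} points in the right direction but is not a proof; if you want to be more rigorous than the paper here, you should complete the argument that a non-crossing $\preceq$-adjacent pair separates the graph.
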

	\begin{proof}
		Our goal here is to compute $succ(\OA)$ and $succG(\OA)$ for each orbit $\OA$, defined as the successor of $\OA$ in the total order of orbits, and the successor of $\OA$ in the corresponding Hamiltonian path, respectively. (Undefined values of these functions mean that there is no successor in the respective sequence.) Having these two functions computed, it is then easy to reconstruct the total order and the Hamiltonian paths. We start by computing the orbits in $O(n)$ time, as described in Lemma~\ref{lem:OrbitsComplexity}.
		
		By Proposition~\ref{prp:Consecutive}, for every two consecutive orbits in the total order, there are at least two consecutive points on $P$, one from each of those orbits. We scan through all consecutive pairs of points on $P$. Let $i$ and $i+1$ be two consecutive points. If they have different color, then they belong to the same orbit and we do nothing in this case. If their color is the same, they belong to different orbits, and from Proposition~\ref{prp:Consecutive} we know that those two orbits are consecutive in the total order. If the color of the points is blue, then there is a point $i+1$ from $\OO{i+1}$ on the right side of blue-red edge $(i,o(i))$ from $\OO{i}$, so we conclude that $\OO{i} \preceq \OO{i+1}$, and we set $succ(\OO{i}) = \OO{i+1}$. In the other case, when the points are red, we set $succ(\OO{i+1}) = \OO{i}$. It only remains to check whether these two orbits cross. If they cross anywhere, then edges $(i,o(i))$ and $(o^{-1}(i+1),i+1)$ must cross each other (otherwise, the whole $\OO{i+1}$ would lie on the right side of $(i,o(i))$), so it is enough to check only for this pair of edges whether they cross. If they do cross, we do the same with the function $succG$, that is, we either set $succG(\OO{i}) = \OO{i+1}$ if the points are blue, or $succG(\OO{i+1}) = \OO{i}$ if they are red. If they do not cross, we do not do anything.
		
		Constructing the corresponding sequences of orbits is done by first finding the orbits which are not successor of any other orbit and then just following the corresponding successor function.
		
		The whole process takes $O(n)$ time in total.
	\end{proof}

	\section{Finding bottleneck matchings}
	\label{sec:BottleneckMatchingsB}
	
	For the problem of finding a bottleneck bichromatic matching of points in convex position, we will utilize the theory that is developed for orbits and the orbit graph, combining it with the approach used in \cite{savic2017faster} to tackle the monochromatic case.
	
	For the special configuration where colors alternate, i.e.,~two points are colored the same if and only if the parity of their indices is the same, we note that every pair $(i,j)$ where $i$ and $j$ are of different parity is feasible. This is also the case with the monochromatic version of the same problem, so since the set of pairs that is allowed to be matched is the same in both cases, the bichromatic problem is in a way a generalization of the monochromatic problem -- to solve the monochromatic problem it is enough to color the points in an alternating fashion, and then apply the algorithm which solves the bichromatic problem.
	
	
	
	\begin{definition}[Turning angle, $\tau$]
		\fig{fig:TurningAngleB}{Turning angle.}{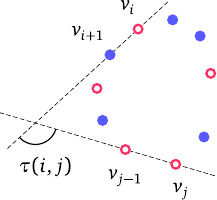}%
		The \emph{turning angle} of $\pij$, denoted by $\tau(i,j)$, is the angle by which the vector $\overrightarrow{v_iv_{i+1}}$ should be rotated in the positive direction to align with the vector $\overrightarrow{v_{j-1}v_j}$, see Figure~\ref{fig:TurningAngleB}.
	\end{definition}
	
	\begin{lemma}
		\label{lem:PiHalfB}
		There is a bottleneck matching $M$ of $P$ such that all diagonals $(i,j) \in M$ have $\tau(i,j) > \pi/2$.
	\end{lemma}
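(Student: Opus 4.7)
The plan is an extremal argument. I would take $M$ to be a bottleneck matching that \emph{minimizes the number of diagonals $(i,j) \in M$ with $\tau(i,j) \leq \pi/2$}, and aim to show this count is zero. Suppose instead such a diagonal $(i,j)$ exists; after possibly swapping the roles of $i$ and $j$ we have $\tau(i,j) \leq \pi/2$, so $\pij$ is the ``short'' arc (unambiguously, since the other arc must then have turning at least $3\pi/2$).

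The single genuinely geometric step is to verify that $\tau(i,j) \leq \pi/2$ forces $v_iv_j$ to be the diameter of $\{v_a : a \in \pij\}$, i.e.\ $|v_av_b| \leq |v_iv_j|$ for every $a,b \in \pij$. I would prove this by placing $v_i$ at the origin with initial tangent $(1,0)$: because the arc turns by at most $\pi/2$, every tangent direction along the arc lies in the first quadrant, so the coordinates are non-decreasing and any chord is dominated componentwise by $\overrightarrow{v_iv_j}$. This is the step I expect to be the main (and really only) obstacle; everything that follows is bookkeeping.

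The core substitution replaces $M|_{\pij}$ by an \emph{edge-only} matching $M^{\mathrm{edge}}$ of $\pij$, whose existence is guaranteed by Property~\ref{prp:MatchingWithEdgesBalanced}. That construction recursively uses the function $o$ inherited from $P$, so every pair in $M^{\mathrm{edge}}$ is a genuine orbit edge of $P$; in particular, because $(i,j)$ is a diagonal and not an orbit edge, $(i,j) \notin M^{\mathrm{edge}}$. Setting
\[
M' := M^{\mathrm{edge}} \cup M|_{\p{j+1}{i-1}},
\]
the two pieces sit on opposite sides of the chord $v_iv_j$, so $M'$ is bichromatic and non-crossing. The diameter bound yields $\bn(M^{\mathrm{edge}}) \leq |v_iv_j| \leq \bn(M)$, hence $\bn(M') \leq \bn(M)$ and $M'$ is again a bottleneck matching.

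Finally I would count short-turn diagonals. Inside $\pij$, $M'$ is built from orbit edges only and thus has no diagonals at all; in particular it has lost $(i,j)$ and any other short-turn diagonal that $M$ had inside $\pij$, while gaining none. Outside $\pij$, $M'$ coincides with $M$. The total count of diagonals with $\tau \leq \pi/2$ has strictly decreased, contradicting the minimality of $M$.
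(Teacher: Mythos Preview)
Your proposal is correct and follows essentially the same approach as the paper: take an extremal bottleneck matching, locate a diagonal with $\tau(i,j)\le\pi/2$, replace the matching on $\pij$ by the edge-only matching supplied by Property~\ref{prp:MatchingWithEdgesBalanced}, and derive a contradiction. The only cosmetic differences are that the paper minimizes the \emph{total} number of diagonals (rather than only the short-turn ones) and simply asserts the diameter claim that you spell out; your parenthetical that the opposite arc must have turning ``at least $3\pi/2$'' is not quite right (since $\tau(i,j)+\tau(j,i)=2\pi-\mathrm{ext}(i)-\mathrm{ext}(j)<2\pi$), but it plays no role in the argument.
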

	
	To prove this lemma, we use the same approach as in~\cite[Lemma 1]{savic2017faster}. The proof is deferred to Appendix.
	
	Next, we consider the division of the interior of the polygon $\mathcal{P}$ into regions obtained by cutting it along all diagonals (but not edges) from the given matching $M$. Each region created by this division is bounded by some diagonals of $M$ and by the boundary of the polygon $\mathcal{P}$.
	
	\begin{definition}[Cascade, $k$-bounded region]
		\fig{fig:CascadesB}{Matching consisting of edges (dashed lines) and diagonals (solid lines). Orbits are denoted by gray shading. \\ There are three cascades in this example: one consist of the three diagonals in the upper part, one consist of the two diagonals in the lower left, and one consist of the single diagonal in the lower right.}{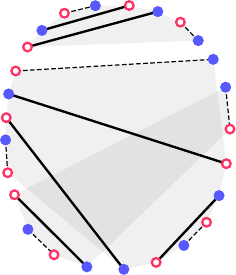}%
		Regions bounded by exactly $k$ diagonals are called \emph{$k$-bounded} regions. Any maximal sequence of diagonals connected by $2$-bounded regions is called a \emph{cascade} (see Figure~\ref{fig:CascadesB} for an example).
	\end{definition}

	\begin{lemma}
		\label{lem:ThreeCascadesB}
		There is a bottleneck matching having at most three cascades.
	\end{lemma}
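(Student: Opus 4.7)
Starting from a bottleneck matching whose diagonals all satisfy $\tau(i,j) > \pi/2$ (which exists by Lemma~\ref{lem:PiHalfB}), choose among such matchings one, call it $M^*$, that minimizes the number of cascades. My plan is to show $M^*$ has at most three cascades, by assuming it has four or more cascades and deriving a contradiction with this minimality.

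First, I would encode the combinatorial structure of $M^*$ as a tree. The diagonals of $M^*$ partition $\mathcal{P}$ into convex regions whose adjacency graph is a tree $\mathcal{T}$; contracting every 2-bounded region turns $\mathcal{T}$ into a tree $\mathcal{T}'$ whose edges are in bijection with cascades, whose leaves are exactly the 1-bounded regions, and whose internal vertices are the $k$-bounded regions with $k \geq 3$. A handshake argument on $\mathcal{T}'$ shows that if $M^*$ has at least four cascades then $\mathcal{T}'$ has at least three leaves; in particular one can isolate two 1-bounded regions $R_1, R_2$ with bounding diagonals $d_1 = (a_1, b_1)$ and $d_2 = (a_2, b_2)$ that lie ``close together'' in $\mathcal{T}'$, either as sibling leaves of a common branching vertex, or as leaves joined by a short path of cascades.

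Next I would perform a local swap: delete $d_1$ and $d_2$ and re-match the affected points using new feasible pairs so that the resulting matching $M'$ retains the bottleneck value, still has all diagonals with $\tau > \pi/2$, and has strictly fewer cascades. Feasibility of the new pairs should follow from Properties~\ref{prp:OrbitsFeasible} and~\ref{prp:FeasibleSplitsBalanced}: since $d_1$ and $d_2$ are feasible, each orbit is split into balanced subsets by each, and a short check confirms the newly introduced endpoints lie in a common orbit and form feasible pairs. Length control comes from a turning-angle accounting: the total turning angle around $\mathcal{P}$ is $2\pi$, each cascade diagonal consumes strictly more than $\pi/2$ of that budget, and convexity inside the sub-polygon bounded by $d_1$ and $d_2$ shows that any new diagonal introduced lies within the convex hull of points already present, hence has length at most $\max(|d_1|, |d_2|) \leq \bn(M^*)$.

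The main obstacle is executing this swap so that all three conditions hold at once: feasibility (handled by the orbit machinery of Section~\ref{sec:Orbits}), no increase in any segment length (handled by the turning-angle plus convexity analysis), and a strict decrease in the cascade count (by collapsing one or more edges of $\mathcal{T}'$). The length argument is closely analogous to the corresponding monochromatic step in~\cite{savic2017faster}, but the feasibility argument is where the new orbit theory is essential, since in the bichromatic setting not every pair is matchable. Once a single valid swap is justified, iterating it until no further reduction is possible yields a bottleneck matching with at most three cascades, contradicting the minimality of $M^*$.
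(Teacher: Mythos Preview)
Your approach is substantially more complicated than needed, and the paper's proof takes a much more direct route. The paper simply observes that the matching $M$ furnished by Lemma~\ref{lem:PiHalfB} \emph{already} has at most three cascades, with no further modification: since every diagonal of $M$ has $\tau > \pi/2$ and the total turning angle around $\mathcal{P}$ is $2\pi$, no region can be bounded by four or more diagonals; and if there were two distinct $3$-bounded regions, one could pick two bounding diagonals from each so that the four are in cyclic position, again forcing a total turning angle exceeding $2\pi$. Hence there is at most one $3$-bounded region, which in the language of your tree $\mathcal{T}'$ means at most one internal vertex (of degree exactly $3$), so at most three edges, i.e.\ at most three cascades. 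No swaps, no orbit machinery, no iteration.

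Your plan, by contrast, tries to reduce the cascade count via a local swap that is never actually specified, and the length-control step contains a genuine error: the assertion that a newly introduced diagonal ``lies within the convex hull of points already present, hence has length at most $\max(|d_1|, |d_2|)$'' does not follow --- a segment inside a convex polygon can easily exceed the length of any two particular chords of that polygon. Without saying precisely which feasible pairs replace $d_1$ and $d_2$, neither feasibility (despite the orbit properties you invoke) nor the length bound nor the preservation of $\tau > \pi/2$ for the new diagonals can be checked, and there is no reason to believe such a swap always exists in the bichromatic setting. The turning-angle budget is indeed the right tool here, but it should be applied directly to exclude $4$-bounded regions and multiple $3$-bounded regions, not as a side constraint on an unspecified rearrangement.
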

	To prove this lemma, we use the same approach as in~\cite[Lemma 2]{savic2017faster}. The proof is deferred to Appendix.
	
	
	
	It is not possible for a matching to have exactly two cascades. If there were exactly two cascades, there would be a region defined by diagonals from both cascades. If that region were bounded by exactly one diagonal from each cascade, it would then be $2$-bounded and, by definition of cascade, those two diagonals would belong to the same cascade. Otherwise, if that region were bounded by more than one diagonal from one of the two cascades, it would then be at least $3$-bounded and, by definition of cascade, no two of its diagonals would belong to the same cascade, and hence we would have more than two cascades.
	
	So, from Lemma~\ref{lem:ThreeCascadesB} we know that there is a bottleneck matching which either has at most one cascade and no $3$-bounded regions, or it has a single $3$-bounded region and exactly three cascades. In the following section we define a set of more elementary problems that will be used to find an optimal solution in both of these cases.

	\subsection{Matchings with at most one cascade}
	\label{sec:SubproblemsB}
	
	When talking about matchings with minimal value under certain constraints, we will refer to these matchings as \emph{optimal}.
	
	\begin{definition}[$\Matching^0$, $M^0$]
		For $i$ and $j$ such that $\pij$ is balanced, let $\Matching^0(i,j)$ be the problem of finding an optimal matching $M^0_{i,j}$ of points in $\pij$ using edges only.
	\end{definition}
	
	\begin{definition}[$\Matching^1$, $M^1$]
		For $i$ and $j$ such that $\pij$ is balanced, let $\Matching^1(i,j)$ be the problem of finding an optimal matching $M^1_{i,j}$ of points in $\pij$, so that $M^1_{i,j}$ has at most one cascade, and the segment $(i,j)$ belongs to a region bounded by at most one diagonal from $M^1_{i,j}$ different from $(i,j)$.
	\end{definition}
	
	When $\pij$ is balanced, Proposition~\ref{prp:MatchingWithEdgesBalanced} ensures that solutions to $\Matching^0(i,j)$ and $\Matching^1(i,j)$ always exist, that is, $M^0_{i,j}$ and $M^1_{i,j}$ are well defined.

	Let $i$ and $j$ be such that $\pij$ is balanced. First, let us analyze how $\Matching^0(i,j)$ can be reduced to smaller subproblems. The point $i$ can be matched either with $o(i)$ or with $o^{-1}(i)$. The first option is always possible because Proposition~\ref{prp:BalancedO} states that $o(i) \in \pij$, but the second one is possible only if $o^{-1}(i) \in \pij$ (it is also possible that $o(i) = o^{-1}(i)$, but no special analysis is needed for that). In the first case, $M^0(i,j)$ is constructed as the union of $(i,o(i))$, and optimal edge-only matchings for point sets $\p{i+1}{o(i)-1}$, if $|\p{i}{o(i)}| > 2$, and $\p{o(i)+1}{j}$, if $o(i) \neq j$, since both sets are balanced. The second case is similar, $M^0(i,j)$ is constructed as the union of $(o^{-1}(i),i)$, and optimal edge-only matchings for point sets $\p{i+1}{o^{-1}(i)-1}$, if $|\p{i}{o^{-1}(i)}| > 2$, and $\p{o^{-1}(i)+1}{j}$, if $o^{-1}(i) \neq j$, since both sets are balanced.
	
	Next, we show how to reduce $\Matching^1(i,j)$ to smaller subproblems. If $i$ and $j$ have different colors, then $(i,j)$ is a feasible pair, and it is possible that $M^1_{i,j}$ includes this pair. In that case, $M^1_{i,j}$ is obtained by taking $(i,j)$ together with $M^1(i+1,j-1)$, if $|\pij| > 2$, since $\p{i+1}{j-1}$ is balanced. Now, assume that $i$ is not matched to $j$ (no matter whether $(i,j)$ is feasible or not). Let $k$ and $l$ be the points in $\pij$ which are matched to $i$ and $j$ in the matching $M^1_{i,j}$, respectively. By the requirement, $(i,k)$ and $(l,j)$ cannot both be diagonals, otherwise $(i,j)$ would belong to the region bounded by more than one diagonal from $M^1_{i,j}$. If $(i,k)$ is an edge, then, depending on the position of the diagonals that belong to the single cascade of $M^1_{i,j}$, the matching is constructed by taking $(i,k)$ together either with $M^0_{i+1,k-1}$, if $|\p{i}{k}| > 2$, and $M^1_{k+1,j}$, if $k \neq j$, or with $M^1_{i+1,k-1}$, if $|\p{i}{k}| > 2$, and $M^0_{k+1,j}$, if $k \neq j$. Similarly, if $(l,j)$ is an edge, then $M^1_{i,j}$ is constructed by taking $(l,j)$ together either with $M^0_{l+1,j-1}$, if $|\p{l}{j}| > 2$, and $M^1_{i,l-1}$, if $i \neq l$, or with $M^1_{l+1,j-1}$, if $|\p{l}{j}| > 2$, and $M^0_{i,l-1}$, if $i \neq l$. All the mentioned matchings exist because their respective underlying point sets are balanced.
	
	As these problems have optimal substructure, we can apply dynamic programming to solve them. If $\bn(M^0_{i,j})$ and $\bn(M^1_{i,j})$ are saved into $S^0(i,j)$ and $S^1(i,j)$, respectively, the following recursion formulas can be used to compute the solutions to $\Matching^0(i,j)$ and $\Matching^1(i,j)$ for all pairs $(i,j)$ such that $\pij$ is balanced.
	\[
	S^0(i,j) = \min
	\begin{cases}
	\max
	\begin{cases}
	& |v_iv_{o(i)}| \\
	\text{if $|\p{i}{o(i)}| > 2$ :} & S^0(i+1,o(i)-1) \\
	\text{if $o(i) \neq j$ :}       & S^0(o(i)+1,j) \\
	\end{cases}\\
	\text{if $(o^{-1}(i) \in \pij)$ :}\\
	\qquad\max
	\begin{cases}
	& |v_iv_{o^{-1}(i)}| \\
	\text{if $|\p{i}{o^{-1}(i)}| > 2$ :} & S^0(i+1,o^{-1}(i)-1) \\
	\text{if $o^{-1}(i) \neq j$ :}       & S^0(o^{-1}(i)+1,j) \\
	\end{cases}\\
	\end{cases}
	\]
	\[
	S^1(i,j) = \min
	\begin{cases}
	\max
	\begin{cases}
	& |v_iv_{o(i)}| \\
	\text{if $|\p{i}{o(i)}| > 2$ :} & S^0(i+1,o(i)-1) \\
	\text{if $o(i) \neq j$ :}       & S^1(o(i)+1,j) \\
	\end{cases} \\
	\max
	\begin{cases}
	& |v_iv_{o(i)}| \\
	\text{if $|\p{i}{o(i)}| > 2$ :} & S^1(i+1,o(i)-1) \\
	\text{if $o(i) \neq j$ :}       & S^0(o(i)+1,j) \\
	\end{cases} \\			
	\max
	\begin{cases}
	& |v_{o^{-1}(j)}v_j| \\
	\text{if $|\p{o^{-1}(j)}{j}| > 2$ :} & S^0(o^{-1}(j)+1,j-1) \\
	\text{if $o^{-1}(j) \neq i$ :}       & S^1(i,o^{-1}(j)-1) \\
	\end{cases} \\
	\max
	\begin{cases}
	& |v_{o^{-1}(j)}v_j| \\
	\text{if $|\p{o^{-1}(j)}{j}| > 2$ :} & S^1(o^{-1}(j)+1,j-1) \\
	\text{if $o^{-1}(j) \neq i$ :}       & S^0(i,o^{-1}(j)-1) \\
	\end{cases} \\
	\text{if $(i,j)$ is feasible:}\\
	\qquad\max
	\begin{cases}
	& |v_iv_j| \\
	\text{if $\pij > 2$ :} & S^1(i+1,j-1) \\
	\end{cases} \\
	\end{cases}
	\]
	
	We fill values of $S^0$ and $S^1$ in order of increasing $j-i$, so that all subproblems are already solved when needed.
	
	Beside the value of a solution $\Matching^1(i,j)$, it is going to be useful to determine if pair $(i,j)$ is necessary for constructing $M^1_{i,j}$.
	
	\begin{definition}[Necessary pair]
		We call an oriented pair $(i,j)$ \emph{necessary} if it is contained in every solution to $\Matching^1(i,j)$.
	\end{definition}
	
	Obviously, a pair can be necessary only if it is feasible. Computing whether $(i,j)$ is a necessary pair can be easily incorporated into the computation of $S^1(i,j)$. Namely the pair $(i,j)$ is necessary, if $(i,j)$ is an edge, or the equation for $S^1(i,j)$ achieves the minimum only in the last case (when $(i,j)$ is feasible). If it is necessary, we set $necessary(i,j)$ to true, otherwise we set it to false. Note that $necessary(i,j)$ does not imply $necessary(j,i)$.
	
	We have the total of $O(n^2)$ subproblems each of which takes $O(1)$ time to be computed, assuming that $o$ and $o^{-1}$ have already been computed. Hence, all computations together require $O(n^2)$ time and the same amount of space.
	
	Note that we computed only the values of solutions to all subproblems. If an actual matching is needed, it can be easily reconstructed from the data in $S$ in linear time per subproblem.
	
	We note that every matching with at most one cascade has a feasible pair $(k,k+1)$ such that the segment $(k,k+1)$ belongs to a region bounded by at most one diagonal from that matching. Indeed, if there are no diagonals in the matching, any pair $(k,k+1)$ where $k$ and $k+1$ have different colors satisfies the condition. If there is a cascade, we take one of the two endmost diagonals of the cascade, let it be $(i,j)$, so that there are no other diagonals from $M$ in $\pij$. Since $\pij$ is balanced, there are two neighboring points $k, k+1 \in \pij$ with different colors, and the pair $(k,k+1)$ is the one we are looking for.
	
	Now, an optimal matching with at most one cascade can be found easily from precomputed solutions to subproblems by finding the minimum of all $S^1(k+1,k)$ for all feasible pairs $(k,k+1)$ and reconstructing $M^1_{k+1,k}$ for $k$ that achieved the minimum. The last (reconstruction) step takes only linear time.

	\subsection{Matchings with three cascades}
	\label{sec:MatchingsWithThreeCascadesB}
	
	As we already concluded, there is a bottleneck matching of $P$ having either at most one cascade, or exactly three cascades. An optimal matching with at most one cascade can be found easily from computed solutions to subproblems, as shown in the previous section. We now focus on finding an optimal matching among all matchings with exactly three cascades, denoted by \emph{$3$-cascade matchings} in the following text.

	Any three distinct points $i$, $j$ and $k$ with $j \in \p{i+1}{k-1}$, where $(i,j)$, $(j+1, k)$ and $(k+1, i-1)$ are feasible pairs, can be used to construct a $3$-cascade matching by simply taking a union of $M^1_{i,j}$, $M^1_{j+1,k}$ and $M^1_{k+1,i-1}$. (Note that these three feasible pairs do not necessarily belong to the combined matching, since they might not be necessary pairs in their respective $1$-cascade matchings.)
	
	To find the optimal matching we could run through all possible triplets $(i,j,k)$ such that $(i,j)$, $(j+1, k)$ and $(k+1, i-1)$ are feasible pairs, and see which one minimizes $\max\{S^1[i,j], S^1[j+1,k], S^1[k+1,i-1]\}$. However, this requires $O(n^3)$ time, and thus is not suitable, since our goal is to design a faster algorithm. Our approach is to show that instead of looking at all $(i,j)$ pairs, it is enough to select $(i,j)$ from a set of linear size, which would reduce the search space to quadratic number of possibilities, so the search would take only $O(n^2)$ time.


	\begin{definition}[Inner diagonals, Inner region, Inner pairs]
		In a $3$-cascade matching, we call the three diagonals at the inner ends of the three cascades the \emph{inner diagonals}. We take the largest region by area, such that it is bounded but not crossed by the segments of the matching, and such that every pair of the three cascades is separated by that region, and we call this region the \emph{inner region}. Pairs of points matched by segments that are on the boundary of the inner region are called the \emph{inner pairs}. For an example, see Figure~\ref{fig:InnerPairs}.
		
		\fig{fig:InnerPairs}{
			The edges are represented by dashed lines, and the diagonals with solid lines. There are three cascades, $\{\{(11,18),(14,17)\},\{(22,27),(23,26)\},\{(30,9),(1,6),(2,5)\}\}$.
			The inner diagonals are $\{(11,18),(22,27),(30,9)\}$, and the
			inner pairs are $\{(10,19),(20,21),(22,27),(28,29),(30,9)\}$.%
		}{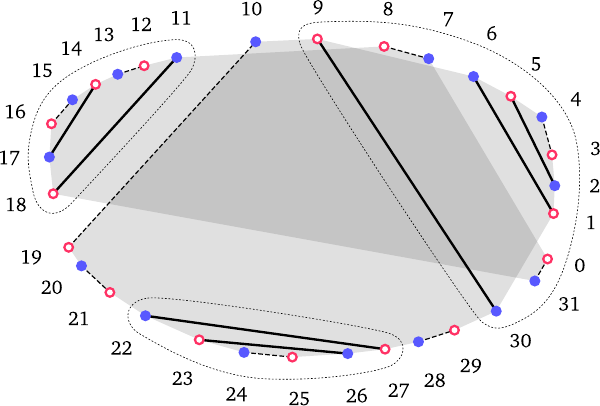}
	\end{definition}
	
	Since the inner region separates the cascades, there must be at least $3$ inner pairs.
	
	\begin{lemma}
		\label{lem:BottleneckWithAllNecessaryB}
		If there is no bottleneck matching with at most one cascade, then there is a bottleneck $3$-cascade matching whose every inner pair is necessary.
	\end{lemma}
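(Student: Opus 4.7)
The plan is to run a minimality argument: pick a bottleneck $3$-cascade matching $M$ that minimizes an appropriate size potential, and then show that if any of its inner pairs were non-necessary, we could strictly decrease the potential while remaining inside the class of bottleneck $3$-cascade matchings.

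The hypothesis, together with Lemma~\ref{lem:ThreeCascadesB} and the observation that matchings with exactly two cascades do not exist, forces every bottleneck matching to have exactly three cascades, so bottleneck $3$-cascade matchings exist. I will define the potential
\[ \Phi(M) := \sum_{k=1}^{3} |\p{i_k}{j_k}|, \]
where $(i_k, j_k)$ denotes the inner diagonal of the $k$-th cascade, oriented so that $\p{i_k}{j_k}$ is the sub-polygon containing that cascade, and pick $M$ minimizing $\Phi$.

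Before the main step, two easy reductions dispense with most inner pairs. If an inner pair $(p,q)$ is an edge, say $q=o(p)$, then in any solution to $\Matching^1(p,q)$ the point $p$ must be matched to its first feasible neighbor, which is $q$, so $(p,q)$ is necessary. If an inner pair is a diagonal, it must coincide with some inner diagonal $(i_k,j_k)$: otherwise such a diagonal would share a $2$-bounded region with the inner diagonal of its own cascade, forcing both into the same cascade and contradicting the definition of inner diagonal. So only the three inner diagonals themselves need to be shown necessary. Suppose for contradiction that $(i_k,j_k)$ is not necessary; then there is an optimal solution $N$ to $\Matching^1(i_k,j_k)$ with $\bn(N) \le \bn(M)$ that omits the pair $(i_k,j_k)$. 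Form $M'$ by keeping $M$'s matching on $\p{j_k}{i_k}$ and replacing the sub-matching on $\p{i_k}{j_k}$ by $N$. Since $\bn(M') \le \bn(M)$, $M'$ is a bottleneck matching, and the same three-cascade reasoning applies: $M'$ has exactly three cascades, two inherited unchanged from $M$ (the ones lying in $\p{j_k}{i_k}$), and a third, $C'$, contained in $\p{i_k}{j_k}$. Its inner diagonal $(a,b) \in N$ is distinct from $(i_k,j_k)$ with both endpoints in $\p{i_k}{j_k}$, so $\p{a}{b} \subsetneq \p{i_k}{j_k}$, hence $\Phi(M') < \Phi(M)$, contradicting minimality.

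The subtle point I expect to work hardest on is verifying that $M'$ really has exactly three cascades. The delicate scenario is when $N$ has zero cascades: then the former inner region of $M$ merges with $N$'s region adjacent to the segment $(i_k,j_k)$ into one region whose boundary contains only two diagonals, namely the inner diagonals of the two unmodified cascades. That merged region is $2$-bounded, so those two inner diagonals become connected through it, collapsing the two surviving cascades of $M$ into a single cascade of $M'$ and leaving $M'$ with only one cascade overall—ruled out by the hypothesis. Therefore $N$ must carry exactly one cascade, which is what guarantees both the three-cascade structure of $M'$ and the strict inclusion $\p{a}{b} \subsetneq \p{i_k}{j_k}$ that drives the potential drop.
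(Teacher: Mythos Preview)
Your argument is correct and is essentially the paper's proof recast as an extremal argument: both replace the sub-matching on $\p{i}{j}$ for a non-necessary inner pair $(i,j)$ by a solution of $\Matching^1(i,j)$ that omits $(i,j)$, and invoke the hypothesis to force the resulting bottleneck matching to again have three cascades. The paper phrases termination as ``the inner region is getting larger with each replacement,'' which is exactly the dual of your potential $\Phi$ strictly decreasing; your extra observations (inner edges are automatically necessary, and the $N$-has-no-cascade case collapses to one cascade) simply spell out steps the paper leaves implicit.
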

	To prove this lemma, we use the same approach as in~\cite[Lemma 3]{savic2017faster}. The proof is deferred to Appendix.
	
	\begin{definition}[Candidate pair, Candidate diagonal]
		An oriented pair $(i,j)$ is a \emph{candidate pair}, if it is a necessary pair and $\tau(i,j) \leq 2\pi/3$. If a candidate pair is a diagonal, it is called a \emph{candidate diagonal}.
	\end{definition}
	
	\begin{lemma}
		\label{lem:BottleneckWithCandidateB}
		If there is no bottleneck matching with at most one cascade, then there is a $3$-cascade bottleneck matching $M$, such that at least one inner pair of $M$ is a candidate pair.
	\end{lemma}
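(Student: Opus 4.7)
The plan is to start from a bottleneck $3$-cascade matching in which every inner pair is necessary, as provided by Lemma~\ref{lem:BottleneckWithAllNecessaryB}, and then identify some such inner pair whose turning angle is at most $2\pi/3$ via a pigeonhole argument on the exterior angles of $\mathcal{P}$.

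Fix the matching $M$ given by Lemma~\ref{lem:BottleneckWithAllNecessaryB}, and let $(i_1,j_1),\ldots,(i_k,j_k)$ be its inner pairs, oriented so that the arc $\p{i_s}{j_s}$ is on the ``outside'' of the inner region for each $s$, and listed in counterclockwise order. The paragraph immediately preceding the statement gives $k\ge 3$. Write $\alpha_v$ for the exterior angle of $\mathcal{P}$ at vertex $v$; these are nonnegative and $\sum_v \alpha_v = 2\pi$. Directly from the definition of the turning angle, $\tau(i_s,j_s)=\sum_{v\in\p{i_s+1}{j_s-1}}\alpha_v$.

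The crux is to show that the vertex sets $\p{i_s+1}{j_s-1}$, over $s = 1,\ldots,k$, are pairwise disjoint; equivalently, no arc of $\mathcal{P}$ between two consecutive inner pairs contains an internal vertex. For this I would argue that every vertex $v$ of $\mathcal{P}$ lying on the boundary of the inner region is already an endpoint of some inner pair: such a $v$ is matched to some $u$, and $(u,v)$ cannot cross the inner region by its very definition, so $u$ must also lie on the boundary, making $(u,v)$ an inner pair. With this disjointness in hand, summing the turning angles over all inner pairs gives
\[
\sum_{s=1}^{k}\tau(i_s,j_s)\;\le\;\sum_v \alpha_v\;=\;2\pi.
\]

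Since $k\ge 3$, some $s_0$ then satisfies $\tau(i_{s_0},j_{s_0})\le 2\pi/k\le 2\pi/3$, and by the choice of $M$ this inner pair is also necessary, hence a candidate pair. The only real obstacle in this plan is the disjointness step, which rests on correctly exploiting the definition of the inner region as a region that is bounded but never crossed by matched pairs; once that is in place, the remainder of the argument is elementary.
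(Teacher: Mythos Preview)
Your proposal is correct and takes essentially the same approach as the paper: invoke Lemma~\ref{lem:BottleneckWithAllNecessaryB} and then pigeonhole on the turning angles of the (at least three) inner pairs, using that their sum cannot exceed $2\pi$. The paper's own proof asserts this bound in one line without further justification, whereas you spell out the disjointness of the arcs $\p{i_s+1}{j_s-1}$ that underlies it; your ``equivalently'' is strictly only an implication (the boundary claim implies disjointness, not conversely), but this is harmless since you establish the stronger statement.
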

	To prove this lemma, we use the same approach as in~\cite[Lemma 4]{savic2017faster}. The proof is deferred to Appendix.

	\fig{fig:PolarityDefinitionsB}{Geometric regions used for locating points $v_{i+1}, \ldots, v_{j-1}$.}{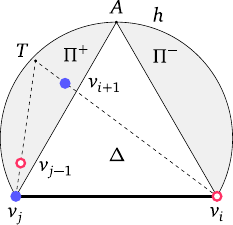}%
	Let us now take a look at an arbitrary candidate diagonal $(i,j)$, and examine the position of points $\p{i}{j} \cap \OO{i}$ relative to it. To do that, we locate points $v_i$ and $v_j$ and then define several geometric regions relative to their position, inspired by the geometric structure used in~\cite{savic2017faster} to tackle the monochromatic version of the problem.
	
	Firstly, we construct the circular arc $h$ on the right side of the directed line $v_iv_j$, from which the line segment $v_iv_j$ subtends an angle of $\pi/3$, see Figure~\ref{fig:PolarityDefinitionsB}. Let $A$ be the midpoint of $h$. Points $v_i$, $A$ and $v_j$ form an equilateral triangle which we denote by $\Delta$. Let $\Pi^-$ be the region bounded by $h$ and the line segment $v_iA$, and $\Pi^+$ the region bounded by $h$ and the line segment $Av_j$.
	
	The following lemma is crucial in our analysis of bichromatic bottleneck matchings. Even though in statement it is similar to~\cite[Lemma 5]{savic2017faster}, which was developed to tackle monochromatic bottleneck matchings, the proof we show here is much more involved, capturing the specifics of the bichromatic version of the problem and making use of the theory we developed around orbits.
	
	\begin{lemma}
		\label{lem:PolarityB}
		For every candidate diagonal $(i,j)$, the points from $\pij \cap \OO{i}$ other than $i$ and $j$ lie either all in $\Pi^-$ or all in $\Pi^+$.
	\end{lemma}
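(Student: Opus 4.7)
Since $(i,j)$ is a candidate diagonal, it is in particular feasible, so by Property~\ref{prp:OrbitsFeasible} we have $\OO{i}=\OO{j}$; write $\mathcal{O}:=\OO{i}$.  The statement decomposes into two claims:  \textbf{(a)} no vertex of $\mathcal{O}\cap\p{i+1}{j-1}$ lies in $\Pi^0$; and \textbf{(b)} these vertices cannot simultaneously populate both $\Pi^-$ and $\Pi^+$.  As in the monochromatic analogue~\cite[Lemma~5]{savic2017faster}, the angular bound $\tau(i,j)\le 2\pi/3$ already forces every vertex of $\p{i+1}{j-1}$ to lie inside $\Pi^-\cup\Pi^0\cup\Pi^+$, so these three regions cover all relevant positions.

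For claim~(a), I would argue by contradiction: suppose $k\in\mathcal{O}\cap\p{i+1}{j-1}\cap\Pi^0$, so $|v_iv_k|<|v_iv_j|$ and $|v_kv_j|<|v_iv_j|$.  Since $i,j$ have opposite colours in $\mathcal{O}$, the point $k$ shares its colour with exactly one of them; WLOG assume $k$ has the colour of $j$, so Property~\ref{prp:OrbitsFeasible} makes $(i,k)$ feasible (the other case is symmetric via $(k,j)$).  Put $l:=o^{-1}(j)$; tracing orbit indices via Property~\ref{prp:SingleTurn} shows $l\in\p{k+1}{j-1}$, and both flanks $\p{i+1}{k-1}$ and $\p{k+1}{l-1}$ are balanced by combining Property~\ref{prp:OrbitsFeasible} with the balancedness of $\p{i}{j}$, so Property~\ref{prp:MatchingWithEdgesBalanced} furnishes edge-only matchings on them.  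I would then propose
\[
M' := M^0_{i+1,k-1}\cup\{(i,k)\}\cup M^0_{k+1,l-1}\cup\{(l,j)\}.
\]
Since $(l,j)$ is an edge and the $M^0$ parts contribute no diagonals, the only potential diagonal of $M'$ is $(i,k)$; hence $M'$ has at most one cascade and the segment $v_iv_j$ is adjacent to a region bounded by this single diagonal, so $M'$ is a valid candidate for $\Matching^1(i,j)$ that omits $(i,j)$.  The crux will then be to show $\bn(M')\le\bn(M^1_{i,j})$: the pair $(i,k)$ is handled directly via $|v_iv_k|<|v_iv_j|\le\bn(M^1_{i,j})$, while the remaining pairs are controlled by comparison against the sub-matchings that $M^1_{i,j}$ itself realises on the corresponding ranges, via an orbit-respecting exchange through $k$ justified by Property~\ref{prp:FeasibleSplitsBalanced}.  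This would contradict the necessity of $(i,j)$.

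For claim~(b), assume $k_-\in\mathcal{O}\cap\Pi^-$ and $k_+\in\mathcal{O}\cap\Pi^+$, both in $\p{i+1}{j-1}$.  Walking along $\mathcal{O}$ from $k_-$ towards $k_+$, claim~(a) guarantees every intermediate orbit point still lies in $\Pi^-\cup\Pi^+$, so along this walk some consecutive orbit pair $(k_-^*,k_+^*)$ straddles the two regions.  By Property~\ref{prp:NeighboringDifferentColors} this pair is an edge, hence feasible.  I would then rerun the swap of claim~(a), this time splicing $(k_-^*,k_+^*)$ into the matching in place of $(i,j)$, with the necessary length estimates following from the defining distance conditions of $\Pi^-$ and $\Pi^+$ together with $\tau(i,j)\le 2\pi/3$.

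\textbf{Main obstacle.}  The technical heart --- and what will make the proof substantially heavier than its monochromatic ancestor --- is the bottleneck comparison in claim~(a): every substitute pair must stay inside $\mathcal{O}$, so the rewiring of $M^1_{i,j}$ must be threaded through the orbit structure rather than executed arbitrarily.  Ensuring that the induced flank sub-matchings can be replaced by their edge-only counterparts without increasing the bottleneck is where the orbit machinery developed in Section~\ref{sec:Orbits}, especially Properties~\ref{prp:MatchingWithEdgesBalanced} and~\ref{prp:FeasibleSplitsBalanced}, will do the bulk of the work.
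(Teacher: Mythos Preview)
Your high-level decomposition into claims (a) and (b) is reasonable, and you are right that (a) is where the bichromatic difficulty lives. But the plan has a real gap in (a), and it overcomplicates (b).

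Claim (b) is in fact immediate from convexity and the bound $\tau(i,j)\le 2\pi/3$, with no orbit reasoning needed: the apex $T$ of the triangle $v_iTv_j$ containing all of $\p{i+1}{j-1}$ lies in the region between segment $v_iv_j$ and the arc $h$, and such a triangle cannot reach both $\Pi^-$ and $\Pi^+$. The paper settles this in its opening paragraph, and the conclusion even holds for \emph{all} points of $\p{i+1}{j-1}$, not only orbit points. Your proposed walk-and-splice argument for (b) is unnecessary and, worse, inherits the same unresolved bottleneck issue as (a).

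For claim (a), the construction
\[
M'=M^0_{i+1,k-1}\cup\{(i,k)\}\cup M^0_{k+1,l-1}\cup\{(l,j)\}
\]
(which also omits a piece on $\p{l+1}{j-1}$) cannot be bounded as you hope. An orbit $\OA\ne\OO{i}$ lying wholly inside one of the flanks may have edges longer than $|v_iv_j|$, whereas in $M^1_{i,j}$ the same points could be matched by short \emph{diagonals} of $\OA$. Replacing that portion by the edge-only $M^0$ can therefore strictly increase the bottleneck. Your appeal to ``comparison against the sub-matchings that $M^1_{i,j}$ itself realises'' does not repair this: $M^0$ and the restriction of $M^1_{i,j}$ are different matchings, and Properties~\ref{prp:MatchingWithEdgesBalanced} and~\ref{prp:FeasibleSplitsBalanced} guarantee only existence, never shortness.

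The paper's argument for (a) is structurally different. After (b), it writes the relevant region as $\Pi^+\cup\Delta^+$ with $\Delta^+\subset\Pi^0$, and exploits that both $\Pi^+$ and $\Delta^+$ have diameter $|v_iv_j|$. It then splits on the parity of $|\Pi^+\cap\OO{i}|$. In the even case, only the intervals under the red-blue edges of $\OO{i}$ are re-matched (each such interval sitting inside one of the two diameter-$|v_iv_j|$ regions), while the optimal matching is \emph{kept verbatim} on all remaining points via Property~\ref{prp:OrbitInteractionParity}. In the odd case a careful sequence of local swaps is performed directly on the optimal matching, each swap justified either by confinement to one of the two regions or by an obtuse-angle estimate. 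The essential idea missing from your plan is that one must \emph{modify} the optimal matching rather than rebuild from edge-only pieces, so that pairs coming from orbits other than $\OO{i}$ retain their original (short) lengths.
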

	\begin{proof}
		W.l.o.g.~let us assume that point $i$ is red. Since $(i,j)$ is a diagonal, there are more than two points in $\pij \cap \OO{i}$. Let $T$ be the point of intersection of lines $v_iv_{i+1}$ and $v_jv_{j-1}$, see Figure~\ref{fig:PolarityDefinitionsB}. Since $\tau(i,j) \leq 2\pi/3$, the point $T$ lies in the region bounded by the line segment $v_iv_j$ and the arc $h$. Because of convexity, all points in $\pij$ must lie inside the triangle $\triangle{v_iTv_j}$, so there cannot be two points from $\pij$ such that one is on the right side of the directed line $v_iA$ and the other is on the left side of the directed line $v_jA$. This implies that either $\Pi^-$ or $\Pi^+$ is empty.
		
		W.l.o.g., let us assume that there are no points from $\pij$ on the right side of the directed line $v_iA$, so all points in $\pij$ lie in $\Pi^+ \cup \Delta$. It is important to note that the diameter of both $\Pi^+$ and $\Delta$ is $|v_iv_j|$, that is, no two points both inside $\Pi^+$ or both inside $\Delta$ are at a distance of more than $|v_iv_j|$.
		
		To complete the proof, we need to prove that no points of $\pij \cap \OO{i}$ other than $i$ and $j$ lie in $\Delta$, so for a contradiction we suppose the opposite, that there is at least one such point in $\Delta$.
		
		We denote the set of points in $\Pi^+$ (including $j$) with $U$. If there are points on $Av_j$, we consider them to belong to $U$. The pair $(i,j)$ is a feasible pair, so, by Proposition~\ref{prp:FeasibleSplitsBalanced}, the number of points from any orbit inside $\pij$ is even, implying that the parity of $|U \cap \OO{i}|$ is the same as the parity of $|(\pij \setminus U) \cap \OO{i}|$. We will analyze two cases depending on the parity of the number of points in $U \cap \OO{i}$.
		
		\paragraph{Case 1}
		There is an even number of points in $U \cap \OO{i}$, and thus also in $(\pij \setminus U) \cap \OO{i}$.
		
		Let $M$ be an optimal matching of points in $\pij$. The pair $(i,j)$ is a candidate pair, and thus necessary, so it is contained in every optimal matching of points in $\pij$, including $M$, and hence $\bn(M) \geq |v_iv_j|$. To complete the proof in this case, we will construct another optimal matching $M'$ that does not contain the pair $(i,j)$, by joining two newly constructed matchings, $M'_{out}$ and $M'_{in}$, thus arriving to a contradiction to the assumption that the pair $(i,j)$ is a candidate pair.
		
		\fig{fig:PolarityProofCase1}{$M'_{in}$ and $M'_{out}$; only points from $\OO{i}$ are depicted as points.}{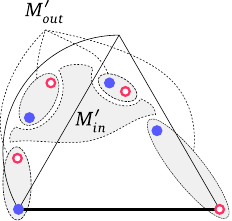}%
		We obtain the matching $M'_{out}$ by arbitrarily matching the set $\p{l}{o(l)}$, for each red-blue edge $(l,o(l))$ of $\OO{i}$ in $\pij$, as illustrated in Figure~\ref{fig:PolarityProofCase1} (note that in the figure only points from $\OO{i}$ are depicted as points). More formally, $M'_{out}$ is a union of matchings of sets $\p{o^{2k}(i)}{o^{2k+1}(i)}$, for each $k \in \{0,1,\ldots,(s-1)/2\}$, where $s$ is the smallest positive integer such that $o^s(i) = j$ (by Proposition~\ref{prp:FeasibleSplitsBalanced} and Lemma~\ref{lem:MatchingAlwaysPossible}, all these matchings exists). Since $|U \cap \OO{i}|$ is even, points of each pair in $M'_{out}$ are either both in $U$ or both in $\pij \setminus U$, that is, they are either both in $\Pi^+$ or both in $\Delta$, so the distance of each pair is at most $|v_iv_j|$, implying $\bn(M'_{out})\leq |v_iv_j|$.
		
		The rest of the points in $\pij$ are all on the right side of blue-red edges of $\OO{i}$, and by Proposition~\ref{prp:OrbitInteractionParity} the points they are paired up with in $M$ are also on the right side of blue-red edges of $\OO{i}$. Therefore, all those pairs are unobstructed by the segments in $M'_{out}$, and we can simply define $M'_{in}$ to be the restriction of $M$ to the set of those points from $\pij$ that are on the right side of blue-red edges of $\OO{i}$.

		
		All points in $\pij$ are covered by $M' = M'_{out} \cup M'_{in}$, and we have that $\bn(M') = \max\{\bn(M'_{in}), \bn(M'_{out})\} \leq \max\{\bn(M),|v_iv_j|\} = \bn(M)$. Since $M$ is optimal, the equality holds and $M'$ is optimal too. We constructed an optimal matching $M'$ on $\pij$ that does not contain the pair $(i,j)$, but such a matching cannot exist because $(i,j)$ is a candidate diagonal, and hence necessary, a contradiction.

		\paragraph{Case 2}
		\fig{fig:PolarityProofCase2UVW}{$U$, $V$ and $W$; only points from $\OO{i}$ are depicted as points.}{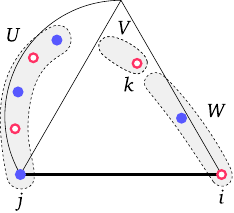}%
		There is an odd number of points in $U \cap \OO{i}$, and thus also in $(\pij \setminus U) \cap \OO{i}$.
		
		Let $k$ be the last point from the set $\pij \cap \OO{i}$ (observed in the positive direction around $\mathcal{P}$, starting from $i$) that lies in $\Delta$, see Figure~\ref{fig:PolarityProofCase2UVW}. Note that $k$ must have the same color as $i$. We define $V := \p{k}{j} \setminus U$ and $W := \p{i}{k-1}$ (we earlier assumed that there is at least one point from $\OO{i}$ other than $i$ in $\Delta$, so $k \neq i$).
		
		By $M$ we denote an optimal matching of points in $\pij$ that minimizes the number of matched pairs between $U$ and $W$. The pair $(i,j)$ is a candidate pair, so it is a necessary pair, that is, every optimal matching of points in $\pij$ contains $(i,j)$, meaning that there is at least one matched pair between $U$ and $W$ in $M$. Let $a$ be the last point in $W$ (observed in the positive direction around $\mathcal{P}$, starting from $i$) matched to a point in $U$, and let $b$ be the point from $U$ it is matched to, i.e.,~$(a,b) \in M$.
		
		\fig{fig:PolarityProofCase2abef}{Even number of points in $\p{i}{a} \cap \OO{a}$.}{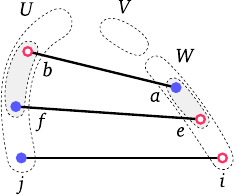}%
		If there is an even number of points in $\p{i}{a} \cap \OO{a}$, then the numbers of red and blue points in that set are equal, so at least one of those points (which has a different color from $a$) must be matched to a point in $U$ as well. Let that point be $e$ and let its pair in $U$ be $f$, see Figure~\ref{fig:PolarityProofCase2abef}.%
		
		We can now modify the matching by replacing $(a,b)$, $(e,f)$, and all the matched pairs between them with a matching of points in $\p{e}{a}$, and a matching of points in $\p{b}{f}$, which is possible by Proposition~\ref{prp:FeasibleSplitsBalanced} and Lemma~\ref{lem:MatchingAlwaysPossible}. Each newly matched pair has both its endpoints in the same set, either $U$ or $W$, so its distance is at most $|v_iv_j|$, meaning that this newly constructed matching is optimal as well. This, however, reduces the number of matched pairs between $U$ and $W$ while keeping the matching optimal, which contradicts the choice of $M$, so there must be an odd number of points in $\p{i}{a} \cap \OO{a}$.

		\fig{fig:PolarityProofCase2abcd}{Odd number of points in $\p{i}{a} \cap \OO{a}$.}{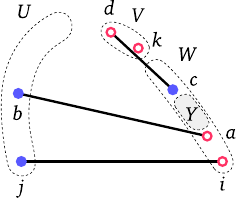}%
		As the number of points from $\OO{i}$ in $V \cup W$ is odd, and the only point in $V$ from $\OO{i}$ is $k$, there is an even number of points from $\OO{i}$ in $W$. Since $i$ and $k$ belong to the same orbit, there is an even number of points from any particular orbit in $W$ (as a consequence of applying Proposition~\ref{prp:FeasibleSplitsBalanced} to each pair of consecutive points of $\OO{i}$ inside $W$). As there is an odd number of points in $\p{i}{a} \cap \OO{a}$, there is an even number of points in $\p{a}{k-1} \cap \OO{a}$, so at least one of them with a color different from $a$ must be matched with a point outside of $W$. Let $c$ be the first such point in $\p{a}{k-1}$ (observed in the positive direction around $\mathcal P$, starting from $a$), see Figure~\ref{fig:PolarityProofCase2abcd}. The way we chose $a$ implies that $c$ cannot be matched to some point in $U$, so it must be matched to a point in $V$, let us call it $d$.
		
		Let us denote the set $\p{a}{c} \setminus \{a,c\}$ by $Y$. The choice of $a$ guarantees that no point in $Y$ is matched to a point in $U$. Points $a$ and $c$ belong to the same orbit, so by Proposition~\ref{prp:FeasibleSplitsBalanced} there is an even number of points from any particular orbit in $Y$. Hence, if there is a point $g_1$ in $Y$ matched to a point $h_1$ in $V$, then there must be another matched pair $(g_2,h_2)$ from the same orbit such that $g_2 \in Y$, $h_2 \in V$, and $g_1$ and $g_2$ have different colors. We modify the matching by replacing $(g_1,h_1)$, $(g_2,h_2)$ and all the matched pairs between them with a matching $M_g$ of points in $\p{g_1}{g_2}$, and a matching $M_h$ of points in $\p{h_1}{h_2}$. This is again possible by Proposition~\ref{prp:FeasibleSplitsBalanced} and Lemma~\ref{lem:MatchingAlwaysPossible}. Matchings $M_g$ and $M_h$ are fully contained in $W$ and $V$, respectively, so no matched pair of theirs is at a distance greater than $|v_iv_j|$, and the newly obtained matching is optimal as well. By iteratively applying this modification we can eliminate all matched pairs between $Y$ and $V$, so that finally there is no matched pairs going out from $Y$, meaning no matched pair crosses either $(a,c)$ or $(b,d)$.
		
		We are now free to ``swap'' the matched pairs between points $a$, $b$, $c$, and $d$, by replacing $(a,b)$ and $(c,d)$ with $(a,c)$ and $(b,d)$, because no other matched pair can possibly cross the newly formed pairs. We need to show that this swap does not increase the value of the matching. The pair $(a,c)$ cannot increase the matching value because $a$ and $c$ are both in $W$, so their distance is at most $|v_iv_j|$. To show that the pair $(b,d)$ also does not increase the value of the matching, we consider two cases based on the position of the point $d$.
		
		\fig{fig:PolarityProofCase2bdShort}{$d$ lies in $\triangle v_iZv_j$.}{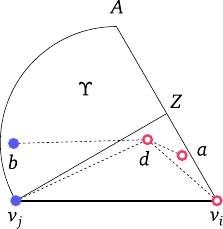}%
		Let $Z$ be the midpoint of the line segment $v_iA$. Let us denote the region $(\Pi^+ \cup \Delta) \setminus \triangle v_iZv_j$ by $\Upsilon$. No two points in $\Upsilon$ are at a distance greater than $|v_iv_j|$. The point $b$ lies in $\Upsilon$. If the point $d$ lies in $\Upsilon$ as well, then $|bd| \leq |v_iv_j|$. Otherwise, $d$ lies in $\triangle v_iZv_j$, see Figure~\ref{fig:PolarityProofCase2bdShort}, and $\measuredangle adb > \measuredangle v_idv_j > \measuredangle v_iZv_j = \pi/2$ (the first inequality holds because the points are in convex position). The angle $\measuredangle adb$ is hence obtuse, and therefore $|bd| < |ab|$. But the pair $(a,b)$ belongs to the original matching $M$, so the newly matched pair $(b,d)$ also does not increase the value of the matching.
		
		By making modifications to the matching $M$ we constructed a new matching $M'$ with the value not greater than the value of $M$. Since $M$ is optimal, these values are actually equal, and the matching $M'$ is also optimal. However, the pair $(a,b)$ is contained in $M$, but not in $M'$, and we did not introduce new matched pairs between $U$ and $W$, so there is a strictly smaller number of matched pairs between $U$ and $W$ in $M'$ than in $M$, which contradicts the choice of $M$.
		
		The analysis of both Case 1 and Case 2 ended with a contradiction, which completes the proof of the lemma.
	\end{proof}

	With $\Pi^-(i,j)$ and $\Pi^+(i,j)$ we respectively denote regions $\Pi^-$ and $\Pi^+$ corresponding to an ordered pair $(i,j)$. For candidate diagonals, the existance of the two possibilities given by Lemma~\ref{lem:PolarityB} induces a concept of \emph{polarity}.
	
	\begin{definition}[Polarity, Pole]
		Let an oriented pair $(i,j)$ be a candidate diagonal. If all points from $\p{i}{j} \cap \OO{i}$ other then $i$ and $j$ lie in $\Pi^-(i,j)$, we say that candidate diagonal $(i,j)$ has \emph{negative polarity} and has $i$ as its \emph{pole}. Otherwise, if these points lie in $\Pi^+(i,j)$, we say that $(i,j)$ has \emph{positive polarity} and the pole in $j$.
	\end{definition}
	
	\begin{lemma}
		\label{lem:CandidatesDontTouchB}
		No two candidate diagonals of the same polarity can have the same point as a pole.
	\end{lemma}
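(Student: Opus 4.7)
The plan is to argue by contradiction. Assume that two distinct candidate diagonals $(p, j_1)$ and $(p, j_2)$ share the pole $p$ and have the same polarity. By the reflective symmetry of the setup, I may assume both polarities are negative (so $p$ is the first coordinate in each oriented pair) and, without loss of generality, $j_1 \in \p{p+1}{j_2-1}$. Applying Lemma~\ref{lem:PolarityB} to $(p, j_2)$: since $(p, j_1)$ is feasible, Property~\ref{prp:OrbitsFeasible} gives $j_1 \in \OO{p}$, and combined with $j_1 \in \p{p+1}{j_2-1}$, the lemma places $v_{j_1} \in \Pi^-(p, j_2)$. The proof of Lemma~\ref{lem:PolarityB} established that $\Pi^-(p, j_2)$ has diameter $|v_pv_{j_2}|$ with $v_p$ on its boundary, which yields the key inequality $|v_pv_{j_1}| \leq |v_pv_{j_2}|$.

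The next step is to exhibit a solution to $\Matching^1(p, j_2)$ that does not contain the pair $(p, j_2)$, contradicting its necessity. Consider the matching
\[
M := \{(p, j_1)\} \cup M^0_{p+1, j_1-1} \cup M^0_{j_1+1, j_2},
\]
which is well-defined: $(p, j_1)$ is feasible, $\p{p+1}{j_1-1}$ is balanced (since $\p{p}{j_1}$ is), and $\p{j_1+1}{j_2} = \p{p}{j_2} \setminus \p{p}{j_1}$ is balanced as a difference of balanced sets, so Property~\ref{prp:MatchingWithEdgesBalanced} guarantees the edge-only sub-matchings exist. Since $M$ has a single diagonal $(p, j_1)$, it has exactly one cascade, and the segment $v_pv_{j_2}$ lies in a region bounded by this diagonal together with polygon edges; hence $M$ satisfies the $\Matching^1(p, j_2)$ constraints. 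Its value is $\bn(M) = \max\{|v_pv_{j_1}|, \bn(M^0_{p+1, j_1-1}), \bn(M^0_{j_1+1, j_2})\}$, of which the first term is at most $|v_pv_{j_2}| \leq \bn(M^1_{p, j_2})$ by the bound above.

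The main obstacle is showing $\bn(M^0_{p+1, j_1-1}), \bn(M^0_{j_1+1, j_2}) \leq \bn(M^1_{p, j_2})$. My approach is to decompose $M^1_{p, j_2}$ (which contains $(p, j_2)$ by necessity) around the matching partner of $j_1$ in $M^1_{p, j_2}$, and to convert any pair that crosses between $\p{p+1}{j_1-1}$ and $\p{j_1+1}{j_2-1}$ into edge-only sub-matchings of each half using orbit-based local swaps analogous to those in the proof of Lemma~\ref{lem:PolarityB}. These conversions, supplied by Property~\ref{prp:MatchingWithEdgesBalanced} on the resulting balanced sub-intervals, never increase the longest pair length, giving the required bound on the two sub-optimal values. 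Combining everything yields $\bn(M) \leq \bn(M^1_{p, j_2})$ without $M$ containing the pair $(p, j_2)$, contradicting its necessity.
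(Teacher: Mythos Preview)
Your approach is genuinely different from the paper's, and the gap lies exactly where you flag ``the main obstacle.'' You need $\bn(M^0_{p+1,j_1-1})$ and $\bn(M^0_{j_1+1,j_2})$ to be at most $\bn(M^1_{p,j_2})$, and the sketch you give does not secure this. The swaps in the proof of Lemma~\ref{lem:PolarityB} work because every replaced pair ends up with both endpoints inside a single region ($\Pi^+$ or $\Delta^+$) whose diameter is $|v_iv_j|$; that geometric fact is what bounds the new pair lengths. In your setting the two ``halves'' are the index ranges $\p{p+1}{j_1-1}$ and $\p{j_1+1}{j_2-1}$, and nothing forces either of these to have diameter at most $|v_pv_{j_2}|$. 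Lemma~\ref{lem:PolarityB} only places the points of $\OO{p}$ inside $\Pi^-(p,j_2)$; points of other orbits can sit anywhere in the triangle $v_pTv_{j_2}$, and an edge of such an orbit lying inside one of your halves can be strictly longer than any pair used by $M^1_{p,j_2}$ (which may instead match those points across the halves, via a diagonal). So ``converting crossings into edge-only sub-matchings'' can increase the bottleneck, and the inequality you need may simply be false. Even if you relaxed from $M^0$ to $M^1$ on the sub-intervals, you would then have to argue the cascade condition for $\Matching^1(p,j_2)$ is preserved, which you have not done.

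By contrast, the paper's proof never touches necessity or alternative matchings at all: it is a three-line angular argument. With both diagonals having (say) positive polarity and pole $k$, and $j$ strictly between $i$ and $k$, Lemma~\ref{lem:PolarityB} places $v_j\in\Pi^+(i,k)$; since $(j,k)$ is a diagonal there is some $l\in(\p{j}{k}\cap\OO{k})\setminus\{j,k\}$, and Lemma~\ref{lem:PolarityB} places $v_l\in\Pi^+(j,k)$. Each $\Pi^+$ region is contained in the cone at $v_k$ between angles $\pi/3$ and $2\pi/3$ from the corresponding chord, so $\angle v_iv_kv_j\ge\pi/3$ and $\angle v_jv_kv_l\ge\pi/3$, whence $\angle v_iv_kv_l\ge 2\pi/3$ and $v_l\notin\Pi^+(i,k)$, contradicting Lemma~\ref{lem:PolarityB} for $(i,k)$. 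This avoids entirely the matching-reconstruction difficulties you ran into.
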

	
	To prove this lemma, we use the same approach as in~\cite[Lemma 6]{savic2017faster}. The proof is deferred to Appendix.
	
	As a simple corollary of Lemma~\ref{lem:CandidatesDontTouchB}, we get that there is at most a linear number of candidate pairs.
	
	\begin{lemma}
		\label{lem:FewCandidatesB}
		There are $O(n)$ candidate pairs.
	\end{lemma}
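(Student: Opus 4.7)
The plan is to split candidate pairs into two disjoint classes and bound each class separately, obtaining a linear bound in each case.

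First, I would handle the candidate pairs that are edges. By definition, every edge is an oriented pair of the form $(i, o(i))$ for some point $i \in P$. Since $o$ is a function defined on $P$ (with $|P| = 2n$), there are at most $2n$ edges in total, and hence at most $2n$ candidate pairs of this type.

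Next, I would handle the candidate diagonals. Here, the definition of polarity assigns to each candidate diagonal a unique pole which is either its source or its target. I would apply Lemma~\ref{lem:CandidatesDontTouchB} as follows: for each of the $2n$ points $v \in P$, there is at most one candidate diagonal of negative polarity with $v$ as a pole and at most one candidate diagonal of positive polarity with $v$ as a pole. Consequently, the total number of candidate diagonals is at most $2 \cdot 2n = 4n$.

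Combining the two bounds yields at most $6n$ candidate pairs in total, which is $O(n)$. The real work is already done inside Lemma~\ref{lem:CandidatesDontTouchB}; this lemma is just the counting bookkeeping that follows from it, so I would not expect any genuine obstacle here beyond correctly tallying oriented pairs (noting that edges and poles are both defined in an oriented manner, so no implicit factor of $2$ is missed).
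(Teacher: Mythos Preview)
Your proof is correct and takes essentially the same approach as the paper: split candidate pairs into edges and candidate diagonals, bound the edges trivially by $O(n)$, and bound the candidate diagonals via Lemma~\ref{lem:CandidatesDontTouchB} (each point is the pole of at most one candidate diagonal of each polarity). Your explicit constants differ slightly from the paper's stated $3n$, but both arguments yield $O(n)$ and the discrepancy is immaterial.
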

	\begin{proof}
		Lemma~\ref{lem:CandidatesDontTouchB} ensures that there are only two candidate diagonals with poles in the same point, one having positive and one having negative polarity. Therefore, there are at most $n$ candidate diagonals of the same polarity, and, consequently, at most $2n$ candidate diagonals in total. The only other possible candidate pairs are edges, and there are exactly $n$ edges, so there can be at most $3n$ candidate pairs.	
	\end{proof}

	Finally, we combine our findings from Lemma~\ref{lem:BottleneckWithCandidateB} and Lemma~\ref{lem:FewCandidatesB}, as described in the beginning of Section~\ref{sec:MatchingsWithThreeCascadesB}, to construct Algorithm~\ref{alg:BottleneckMatchingB}.
	
	\begin{algorithm}
		\caption{Bottleneck Matching}
		\label{alg:BottleneckMatchingB}
		\begin{algorithmic}[1]
			\State Compute orbits.
			\State Compute $S^1[i,j]$ and $necessary(i,j)$, for all $i$ and $j$ such that $\pij$ is balanced, as described in Section~\ref{sec:SubproblemsB}.
			
			\State $best \leftarrow \min\{S^1[k+1,k] : k \in \p{0}{2n-1}, (k+1,k) \text{ is feasible}\}$
			
			\For {all feasible $(i,j)$}		
			\If{$necessary(i,j)$ and $\tau(i,j) \leq 2\pi/3$}
			\For {all $k \in \p{j+1}{i-1}$ such that $(j+1,k)$ is feasible}
			\State $best \leftarrow \min\{best, \max\{S^1(i,j), S^1(j+1,k), S^1(k+1,i-1)\}\}$
			\EndFor
			\EndIf
			\EndFor
		\end{algorithmic}
	\end{algorithm}
	
	\begin{theorem}
		Algorithm~\ref{alg:BottleneckMatchingB} finds the value of bottleneck matching in $O(n^2)$ time and space.
	\end{theorem}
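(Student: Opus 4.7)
The plan is to verify correctness and then account for the running time. For correctness, I would combine Lemma~\ref{lem:ThreeCascadesB} with the observation made just after it (ruling out the case of exactly two cascades) to conclude that there is a bottleneck matching $M^*$ that either has at most one cascade, or has exactly three cascades. The algorithm computes two candidates for the optimum, and I need to argue one of them equals $\bn(M^*)$.

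First, I would handle the at-most-one-cascade case. Recall the remark at the end of Section~\ref{sec:SubproblemsB}: any matching with at most one cascade admits a feasible pair of the form $(k,k+1)$ such that the segment $v_kv_{k+1}$ bounds a region of the matching touched by at most one diagonal. This is exactly the defining condition of $\Matching^1(k+1,k)$, so $M^*$ witnesses $\bn(M^1_{k+1,k})\le\bn(M^*)$ for some such $k$. Since $\bn(M^1_{k+1,k})=S^1[k+1,k]$, the initialisation
\[
best \leftarrow \min\{S^1[k+1,k] : (k+1,k) \text{ feasible}\}
\]
gives a value that is at most $\bn(M^*)$; and since each $S^1[k+1,k]$ is realised by an actual matching of $P$, the value is at least the optimum. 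Equality follows.

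Next I would handle the three-cascade case, assuming no bottleneck matching has at most one cascade. By Lemma~\ref{lem:BottleneckWithCandidateB} there is a $3$-cascade bottleneck matching $M^*$ with an inner pair $(i,j)$ that is a candidate pair; that is, $necessary(i,j)$ holds and $\tau(i,j)\le 2\pi/3$. Let $(j+1,k)$ and $(k+1,i-1)$ be the other two inner pairs (they are feasible because the three arcs $\p{i}{j}$, $\p{j+1}{k}$, $\p{k+1}{i-1}$ each support a piece of a cascade and are therefore balanced). The restriction of $M^*$ to the points in $\p{i}{j}$ is a witness for $\Matching^1(i,j)$, so $S^1[i,j]\le\bn(M^*)$, and analogously for the other two arcs. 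Hence
\[
\max\{S^1[i,j],\,S^1[j+1,k],\,S^1[k+1,i-1]\} \le \bn(M^*).
\]
Conversely, concatenating the matchings $M^1_{i,j}$, $M^1_{j+1,k}$, $M^1_{k+1,i-1}$ yields an actual matching of $P$ whose value is exactly this maximum, so the inequality is tight. Because the algorithm examines every pair $(i,j)$ with $necessary(i,j)$ and $\tau(i,j)\le 2\pi/3$ (i.e.\ every candidate pair) and then every admissible $k$, it eventually tries this particular triple $(i,j,k)$ and $best$ is updated to $\bn(M^*)$. On the other hand, every value ever assigned to $best$ is the $\bn$ of an actual matching, so $best$ cannot undershoot the optimum.

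For the running time, orbits are computed in $O(n)$ by Lemma~\ref{lem:OrbitsComplexity}. The table $S^1$ together with $necessary$ is filled over the $O(n^2)$ balanced pairs $(i,j)$ with $O(1)$ work per cell using the recurrences from Section~\ref{sec:SubproblemsB}, giving $O(n^2)$. The initial minimisation is $O(n)$. For the double loop, Lemma~\ref{lem:FewCandidatesB} guarantees at most $O(n)$ candidate pairs, and the inner loop over $k$ does $O(n)$ work per outer iteration, so the nested loop is $O(n^2)$. The total is therefore $O(n^2)$.

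The main obstacle I anticipate is making the reduction to candidate pairs watertight: one has to argue that the particular inner pair supplied by Lemma~\ref{lem:BottleneckWithCandidateB} is actually an $(i,j)$ that the outer loop iterates over, and that the corresponding restrictions of $M^*$ to the three arcs are genuine solutions to the $\Matching^1$ subproblems (rather than merely upper bounds on their values). The second point is the delicate one; it rests on the fact that the diagonal incident to the inner region from each cascade is, by construction, the unique diagonal of that cascade adjacent to the inner pair, so the cascade's restriction meets the "at most one cascade and at most one extra diagonal touching the boundary segment" requirement in the definition of $\Matching^1$.
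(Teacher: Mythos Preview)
Your argument follows the same line as the paper's proof: split into the at-most-one-cascade and three-cascade cases, invoke Lemma~\ref{lem:BottleneckWithCandidateB} for the latter, and bound the running time via Lemma~\ref{lem:FewCandidatesB}. One inaccuracy is worth flagging: you write ``let $(j+1,k)$ and $(k+1,i-1)$ be the other two inner pairs,'' but a $3$-cascade matching can have more than three inner pairs (Figure~\ref{fig:InnerPairs} has five), so $(j+1,k)$ is not literally an inner pair of $M^*$; rather, $k$ should be taken as the right endpoint $j_r$ of a suitable inner pair so that the two cascades in $\p{j+1}{i-1}$ fall on opposite sides. Relatedly, your parenthetical ``feasible because \ldots\ balanced'' conflates balancedness with feasibility; the latter also needs a colour difference. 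The paper's own proof is no more explicit on this point---it simply asserts ``we surely examined at least one bottleneck matching'' without spelling out how $k$ is chosen---so your level of rigour matches the paper's, and your closing paragraph correctly identifies this as the delicate step.
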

	\begin{proof}
		The first step, computing orbits, can be done in $O(n)$ time, as described in the proof of Lemma~\ref{lem:OrbitsComplexity}.
		The second step, computing $S^1(i,j)$ and $necessary(i,j)$, for all $(i,j)$ pairs, is done in $O(n^2)$ time, as described in Section~\ref{sec:SubproblemsB}. The third step finds the minimal value of all matchings with at most one cascade in $O(n)$ time.
		
		The rest of the algorithm finds the minimal value of all $3$-cascade matchings. Lemma~\ref{lem:BottleneckWithCandidateB} tells us that there is a bottleneck matching among $3$-cascade matchings such that one inner pair of that matching is a candidate pair, so the algorithm searches through all such matchings. We first fix the candidate pair $(i,j)$ and then enter the inner for-loop, where we search for an optimal $3$-cascade matching having $(i,j)$ as an inner pair. Although the outer for-loop is executed $O(n^2)$ times, Lemma~\ref{lem:FewCandidatesB} guarantees that the if-block is entered only $O(n)$ times. The inner for-loop splits $\p{j+1}{i-1}$ in two parts, $\p{j+1}{k}$ and $\p{k+1}{i-1}$, which together with $\pij$ make three parts, each to be matched with at most one cascade. We already know the values of optimal solutions for these three subproblems, so we combine them and check if we get a better overall value. At the end, the minimum value of all examined matchings is contained in $best$, and that has to be the value of a bottleneck matching, since we surely examined at least one bottleneck matching.
		
		The algorithm uses $O(n^2)$ space for storing $n \times n$ matrices $S$ and $necessary$.
	\end{proof}
	
	Algorithm~\ref{alg:BottleneckMatchingB} gives only the value of a bottleneck matching, however, it is easy to reconstruct an actual bottleneck matching by reconstructing matchings for subproblems that led to the minimum value. This reconstruction can be done in linear time.

	\section{Points on a circle}
	\label{sec:Circle}
	
	It this section we consider the case where all points lie on a circle. Obviously, the algorithm for the convex case can be applied here, but utilizing the geometry of a circle we can do better.
	
	Employing the properties of orbits that we developed, we construct an $O(n)$ time algorithm for the problem of finding a bottleneck matching.
	
	We will make use of the following lemma.
	
	\begin{lemma}\cite{biniaz2014bottleneck}
		\label{lem:CircleEdges}
		If all the points of $P$ lie on the circle, then there is a bottleneck matching in which each point $i$ is connected either to $o(i)$ or $o^{-1}(i)$.
	\end{lemma}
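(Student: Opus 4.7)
The plan is to proceed by induction on the number of diagonals in a bottleneck matching, showing that any diagonal can be swapped out for edges without increasing the bottleneck value. Since an edge $(k, o(k))$ connects $k$ to $o(k)$ and, by Property~\ref{prp:O-1BijectionInverse}, also connects $o(k)$ to $o^{-1}(o(k)) = k$, a matching containing only edges is precisely one in which each point is paired with its $o$- or $o^{-1}$-image. So it suffices to produce a bottleneck matching with no diagonals.

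Starting from an arbitrary bottleneck matching $M$, suppose $M$ contains a diagonal $(i,j)$. Both $\pij$ and $\pji$ are balanced, hence by Property~\ref{prp:MatchingWithEdgesBalanced} each of them admits a matching using only edges. Let $\alpha$ be the arc length of $\pij$ and $\beta = 2\pi - \alpha$ the arc length of $\pji$; since $\alpha + \beta = 2\pi$, at least one of them is at most $\pi$, and WLOG $\alpha \leq \pi$. Delete from $M$ the diagonal $(i,j)$ together with the induced sub-matching on $\p{i+1}{j-1}$, and insert an edge-only matching $M^{*}$ of the balanced arc $\pij$. The untouched sub-matching on $\p{j+1}{i-1}$ together with $M^{*}$ is still a perfect matching; it is crossing-free because any new chord has both endpoints in $\pij$ while any untouched chord has both endpoints in $\p{j+1}{i-1}$, and such pairs do not interleave along the convex boundary.

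For the bottleneck bound, recall that on a circle of radius $R$ the chord subtending an arc of measure $\theta$ has length $2R \sin(\theta/2)$, and this function is non-decreasing on $[0,\pi]$. By the recursive construction used in the proof of Property~\ref{prp:MatchingWithEdgesBalanced}, every edge $(a, o(a)) \in M^{*}$ has its positively oriented arc from $a$ to $o(a)$ contained in the arc $\pij$ (by repeated applications of Property~\ref{prp:BalancedO} to nested balanced subintervals), hence of arc measure at most $\alpha \leq \pi$. Therefore
\[
|v_a v_{o(a)}| \;=\; 2R \sin\!\left(\tfrac{\theta_{a}}{2}\right) \;\leq\; 2R \sin\!\left(\tfrac{\alpha}{2}\right) \;=\; |v_i v_j| \;\leq\; \bn(M).
\]
Thus the new matching is still a bottleneck matching, and it contains one fewer diagonal than $M$. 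Iterating eliminates every diagonal.

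The main obstacle is that the chord-length function $\theta \mapsto 2R\sin(\theta/2)$ is not monotone on the whole range $[0, 2\pi]$: a naive replacement on the $\pij$ side would fail when $\alpha > \pi$, because short arcs inside a long one could still yield long chords (e.g. an arc close to $\pi$ inside an arc close to $3\pi/2$). The remedy is exactly the orientation choice above: because $\alpha + \beta = 2\pi$ forces at least one of the two balanced sides to have arc at most $\pi$, we always perform the swap on that side, and monotonicity of $\sin(\theta/2)$ on $[0,\pi]$ does the rest.
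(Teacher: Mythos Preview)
Your argument is correct. The paper does not actually prove this lemma; it merely quotes it from \cite{biniaz2014bottleneck}, so there is no ``paper's own proof'' to compare against. Your proof is a clean, self-contained version that fits naturally into the present paper's framework: it uses Property~\ref{prp:MatchingWithEdgesBalanced} to replace one side of a diagonal by an edge-only matching, and the circle geometry (monotonicity of $2R\sin(\theta/2)$ on $[0,\pi]$) to bound the new chord lengths by $|v_iv_j|\le\bn(M)$. The one place that needs care---choosing the side whose arc is at most $\pi$ so that the chord-length bound is valid---you handle explicitly, and the recursive construction in Property~\ref{prp:MatchingWithEdgesBalanced} indeed guarantees that every inserted edge $(a,o(a))$ has its positive arc contained in the chosen arc $\pij$, so $\theta_a\le\alpha\le\pi$. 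Termination follows because each iteration removes at least the chosen diagonal $(i,j)$ and introduces no new diagonals.

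One small cosmetic point: the opening sentence frames the argument as ``induction on the number of diagonals,'' but what you actually do is an iterative replacement that strictly decreases the diagonal count; calling it a descent argument (or simply ``repeat until no diagonals remain'') would match the body of the proof more precisely.
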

	
	%
	%
	%
	%
	
	This statement implies that there is a bottleneck matching $M^E$ that can be constructed by taking alternating edges from each orbit, i.e.,~from each orbit we take either all red-blue or all blue-red edges. To find a bottleneck matching we can search only through such matchings, and to reduce the number of possibilities even more, we use properties of the orbit graph.
	
	\begin{theorem}
		A bottleneck matching for points on a circle can be found in $O(n)$ time.
	\end{theorem}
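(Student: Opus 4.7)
The plan is to combine Lemma~\ref{lem:CircleEdges} with the structural information in the orbit graph. By that lemma, it suffices to search for a bottleneck matching whose every segment is an edge of an orbit; equivalently, the restriction of the matching to each orbit $\OA$ is one of the two alternating perfect matchings of $\OA$, namely all red-blue edges of $\OA$ or all blue-red edges of $\OA$. Record this binary choice as $x_\OA \in \{1, 0\}$, with $1$ standing for red-blue and $0$ for blue-red.

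I will characterize exactly which choices $(x_\OA)_{\OA \in \OO{P}}$ yield a non-crossing matching. Within a single orbit, both alternating matchings are non-crossing because the orbit traverses its points in cyclic order (Property~\ref{prp:SingleTurn}). Two orbits in distinct weakly connected components of $\GR(P)$ have no crossing pairs of edges, so their choices are independent. For $\OA \preceq \OB$ that \emph{do} cross, the definition of $\preceq$ together with Property~\ref{prp:OrbitInteractionParity} forces every point of $\OB$ to lie on the left of every red-blue edge of $\OA$, which prevents any edge of $\OB$ from crossing a red-blue edge of $\OA$; symmetrically, Property~\ref{prp:OrbitSynchronicity} implies that no edge of $\OA$ can cross a blue-red edge of $\OB$. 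With Property~\ref{prp:NoEdgesOfTheSameTypeCross} ruling out same-type crossings, the only possible crossings between $\OA$ and $\OB$ are between blue-red edges of $\OA$ and red-blue edges of $\OB$, and because the orbits do cross at least one such pair must cross. Hence the feasibility condition collapses to the single inequality $x_\OA \geq x_\OB$ for every arc $\OA \to \OB$ of $\GR(P)$.

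Each weakly connected component of $\GR(P)$ has, by Property~\ref{prp:HamiltonianPath} and Lemma~\ref{lem:OrbitGraphComplexity}, a unique Hamiltonian path $\OA_1, \OA_2, \ldots, \OA_k$ sorted by $\preceq$, with every arc of the component directed forward along it. The feasible sequences $(x_{\OA_1}, \ldots, x_{\OA_k})$ on the component are therefore precisely the non-increasing $\{1, 0\}$-sequences, each determined by a single switch index $s \in \{0, 1, \ldots, k\}$ with $x_i = 1$ for $i \leq s$ and $x_i = 0$ for $i > s$. Letting $r_i$ and $b_i$ denote respectively the longest red-blue edge and the longest blue-red edge of $\OA_i$, the component's contribution to the bottleneck equals
\[
	\min_{0 \leq s \leq k}\, \max\Bigl(\max_{1 \leq i \leq s} r_i,\ \max_{s < i \leq k} b_i\Bigr),
\]
which is computed in $O(k)$ time from prefix maxima of $(r_i)$ and suffix maxima of $(b_i)$ (with the empty maximum interpreted as $-\infty$). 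The value of a bottleneck matching is the maximum of these per-component values.

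For the time bound, orbits are computed in $O(n)$ (Lemma~\ref{lem:OrbitsComplexity}); the weakly connected components of $\GR(P)$ together with their Hamiltonian paths in $O(n)$ (Lemma~\ref{lem:OrbitGraphComplexity}); all $r_i$ and $b_i$ in $O(n)$ total since each point lies on exactly one red-blue and one blue-red orbit-edge; and the per-component min-max sweeps sum to $O(\sum_c k_c) = O(n)$. A concrete bottleneck matching is then emitted in linear time by outputting, for each component, the chosen edges for each orbit. I expect the main obstacle to be the crossing analysis: carefully ruling out three of the four polarity combinations between two crossing orbits, showing that the remaining combination is actually forced to produce a crossing, and distilling the entire feasibility constraint into the single monotone inequality $x_\OA \geq x_\OB$ along arcs of $\GR(P)$; once that is in place, the optimization over switch indices is routine.
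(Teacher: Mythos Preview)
Your proposal is correct and follows essentially the same approach as the paper: reduce to edge-only matchings via Lemma~\ref{lem:CircleEdges}, use the Hamiltonian path in each weakly connected component of $\GR(P)$ to show the per-orbit red-blue/blue-red choices must be monotone along that path, and optimize over the switch index with prefix/suffix maxima. Your write-up is in fact more careful than the paper's in two respects: you explicitly verify that \emph{every} monotone choice yields a non-crossing matching (the paper tacitly assumes each $M_l$ is valid), and you spell out the independence across components; but the underlying argument is the same.
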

	
	\begin{proof}
		From Proposition~\ref{prp:HamiltonianPath} we know that for an arbitrary weakly connected component of the orbit graph there is a Hamiltonian path $\OL_0, \OL_1, \ldots, \OL_{m-1}$. For each $k \in \{0, \ldots, m-2\}$ there is an arc from $\OL_k$ to $\OL_{k+1}$, and those two orbits intersect each other. Since $\OL_k \preceq \OL_{k+1}$, the only edges from $\OL_k$ that intersect $\OL_{k+1}$ are blue-red edges, and only edges from $\OL_{k+1}$ that intersect $\OL_k$ are red-blue edges. Hence, $M^E$ cannot have blue-red edges from $\OL_k$ and red-blue edges from $\OL_{k+1}$. This further implies that there is $l \in \{0, 1, \ldots, m\}$ such that $\OL_0, \ldots, \OL_{l-1}$ all contribute to $M^E$ with red-blue edges and $\OL_l, \ldots, \OL_{m-1}$ all contribute to $M^E$ with blue-red edges. Let $M_l$ be the matching constructed by taking red-blue edges from $\OL_0, \ldots, \OL_{l-1}$, and blue-red edges from $\OL_l, \ldots, \OL_{m-1}$.
		
		For each $l$, the value of $M_l$ can be obtained as $\max \{RB_l, BR_l\}$, where $RB_l$ is the length of the longest red-blue edge in $\OL_0, \ldots, \OL_{l-1}$, and $BR_l$ is the length of the longest blue-red edge in $\OL_l, \ldots, \OL_{m-1}$. The computation of sequences $RB$ and $BR$ can be done in $O(n)$ total time, since $RB_l$ is maximum of $RB_{l-1}$ and the longest red-blue edge in $\OL_{l-1}$, and $BR_l$ is maximum of $BR_{l+1}$ and the longest blue-red edge in $\OL_l$. After we compute these sequences, we compute the value of $M_l$ for each $l$, and take the one with the minimum value, which must correspond to a bottleneck matching.
		
		We first compute orbits and Hamiltonian paths in $O(n)$ time (Lemma~\ref{lem:OrbitsComplexity} and \ref{lem:OrbitGraphComplexity}). Next, we compute the longest red-blue and blue-red edge in each orbit, which we then use to compute $RB_l$, $BR_l$, and $M_l$, for each weakly connected component of the orbit graph, and finally $M^E$, as we just described. Each step in this process takes at most $O(n)$ time, so the total running time for this algorithm is $O(n)$ as well.
	\end{proof}

\section*{Acknowledgments}
We are grateful to the anonymous referees, whose useful and detailed comments improved our paper.

	\bibliographystyle{unsrt}
	\bibliography{references}

\begin{thebibliography}{10}

\bibitem{kaneko2003discrete}
Atsushi Kaneko and Mikio Kano.
\newblock Discrete geometry on red and blue points in the plane — a survey
  —.
\newblock In {\em Discrete and Computational Geometry}, volume~25 of {\em
  Algorithms and Combinatorics}, pages 551--570. Springer, 2003.

\bibitem{aloupis2013non}
Greg Aloupis, Jean Cardinal, S{\'e}bastien Collette, Erik~D Demaine, Martin~L
  Demaine, Muriel Dulieu, Ruy Fabila-Monroy, Vi~Hart, Ferran Hurtado, Stefan
  Langerman, Maria Saumell, Carlos Seara, and Perouz Taslakian.
\newblock Non-crossing matchings of points with geometric objects.
\newblock {\em Computational Geometry}, 46(1):78--92, 2013.

\bibitem{kratochvil2013non}
Jan Kratochv{\'\i}l and Torsten Ueckerdt.
\newblock Non-crossing connectors in the plane.
\newblock In {\em Theory and Applications of Models of Computation}, volume
  7876 of {\em Lecture Notes in Computer Science}, pages 108--120. Springer,
  2013.

\bibitem{bogachev2012monge}
Vladimir~Igorevich Bogachev and Aleksandr~Viktorovich Kolesnikov.
\newblock The {M}onge-{K}antorovich problem: achievements, connections, and
  perspectives.
\newblock {\em Russian Mathematical Surveys}, 67(5):785--890, 2012.

\bibitem{aichholzer2012compatible}
Oswin Aichholzer, Ferran Hurtado, and Birgit Vogtenhuber.
\newblock Compatible matchings for bichromatic plane straight-line graphs.
\newblock In {\em Proceedings of the 28th EuroCG}, pages 257--260, 2012.

\bibitem{aloupis2015bichromatic}
Greg Aloupis, Luis Barba, Stefan Langerman, and Diane~L Souvaine.
\newblock Bichromatic compatible matchings.
\newblock {\em Computational Geometry}, 48(8):622--633, 2015.

\bibitem{aichholzer2018linear}
Oswin Aichholzer, Luis Barba, Thomas Hackl, Alexander Pilz, and Birgit
  Vogtenhuber.
\newblock Linear transformation distance for bichromatic matchings.
\newblock {\em Computational Geometry}, 68:77--88, 2018.

\bibitem{aichholzer2009compatible}
Oswin Aichholzer, Sergey Bereg, Adrian Dumitrescu, Alfredo Garc{\'\i}a, Clemens
  Huemer, Ferran Hurtado, Mikio Kano, Alberto M{\'a}rquez, David Rappaport,
  Shakhar Smorodinsky, Diane Souvaine, Jorge Urrutia, and David~R Wood.
\newblock Compatible geometric matchings.
\newblock {\em Computational Geometry}, 42(6):617--626, 2009.

\bibitem{kaneko2004path}
Atsushi Kaneko, Mikio Kano, and Kazuhiro Suzuki.
\newblock Path coverings of two sets of points in the plane.
\newblock In J\'anos Pach, editor, {\em In Towards a Theory of Geometric
  Graph}, pages 99--101. American Math. Society, 2004.

\bibitem{hurtado2008encompassing}
Ferran Hurtado, Mikio Kano, David Rappaport, and Csaba~D T{\'o}th.
\newblock Encompassing colored planar straight line graphs.
\newblock {\em Computational Geometry}, 39(1):14--23, 2008.

\bibitem{sharir2006number}
Micha Sharir and Emo Welzl.
\newblock On the number of crossing-free matchings, cycles, and partitions.
\newblock {\em SIAM Journal on Computing}, 36(3):695--720, 2006.

\bibitem{mantas2021new}
Ioannis Mantas, Marko Savi{\'{c}}, and Hendrik Schrezenmaier.
\newblock New variants of perfect non-crossing matchings.
\newblock In {\em Algorithms and Discrete Applied Mathematics}, pages 151--164.
  Springer, 2021.

\bibitem{chang1992solving}
Maw-Shang Chang, Chuan~Yi Tang, and Richard C.~T. Lee.
\newblock Solving the {E}uclidean bottleneck matching problem by $k$-relative
  neighborhood graphs.
\newblock {\em Algorithmica}, 8(1-6):177--194, 1992.

\bibitem{efrat2000computing}
Alon Efrat and Matthew~J Katz.
\newblock Computing {E}uclidean bottleneck matchings in higher dimensions.
\newblock {\em Information Processing Letters}, 75(4):169--174, 2000.

\bibitem{abu2014bottleneck}
A~Karim Abu-Affash, Paz Carmi, Matthew~J Katz, and Yohai Trabelsi.
\newblock Bottleneck non-crossing matching in the plane.
\newblock {\em Computational Geometry}, 47(3):447--457, 2014.

\bibitem{savic2017faster}
Marko Savi{\'c} and Milo{\v{s}} Stojakovi{\'c}.
\newblock Faster bottleneck non-crossing matchings of points in convex
  position.
\newblock {\em Computational Geometry}, 65:27--34, 2017.

\bibitem{carlsson2015bottleneck}
John~Gunnar Carlsson, Benjamin Armbruster, Saladi Rahul, and Haritha Bellam.
\newblock A bottleneck matching problem with edge-crossing constraints.
\newblock {\em International Journal of Computational Geometry and
  Applications}, 25(4):245--262, 2015.

\bibitem{efrat01geometryhelps}
Alon Efrat, Alon Itai, and Matthew~J Katz.
\newblock Geometry helps in bottleneck matching and related problems.
\newblock {\em Algorithmica}, 31(1):1--28, 2001.

\bibitem{biniaz2014bottleneck}
Ahmad Biniaz, Anil Maheshwari, and Michiel H.~M. Smid.
\newblock Bottleneck bichromatic plane matching of points.
\newblock In {\em Proceedings of the 26th Canadian Conference on Computational
  Geometry, {CCCG} 2014, Halifax, Nova Scotia, Canada}, 2014.

\bibitem{abu2017monochromatic}
A~Karim Abu-Affash, Sujoy Bhore, and Paz Carmi.
\newblock Monochromatic plane matchings in bicolored point set.
\newblock In {\em Proceedings of the 29th Canadian Conference on Computational
  Geometry, {CCCG} 2017, Carleton University, Ottawa, Ontario, Canada}, pages
  7--12, 2017.

\bibitem{atallah1985matching}
Mikhail~J. Atallah.
\newblock A matching problem in the plane.
\newblock {\em Journal of Computer and System Sciences}, 31(1):63--70, 1985.

\end{thebibliography}
	
	\appendix
	
	\section{Appendix}
	
	\begin{proof}(of Lemma~\ref{lem:PiHalfB})
		\figX{fig:PiHalfB}{(a) Matching before the transformation. (b) Matching after the transformation.}{%
			\parbox{4cm}{\centering\includegraphics{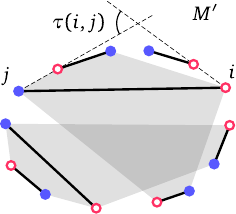}\\(a)}
			\hspace{1cm}%
			\parbox{4cm}{\centering\includegraphics{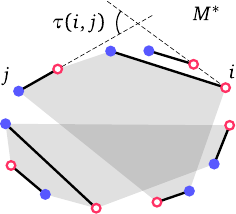}\\(b)}
		}%
		
		Let us suppose that there is no such matching. Let $M'$ be a bottleneck matching with the least number of diagonals. By the assumption, there is a diagonal $(i,j) \in M'$ such that $\tau(i,j) \leq \pi/2$, see Figure~\ref{fig:PiHalfB}(a). By Proposition~\ref{prp:MatchingWithEdgesBalanced} we can replace all pairs from $M'$ lying in $\pij$, including the diagonal $(i,j)$, with the matching containing only edges, and by doing so we obtain a new matching $M^*$, see Figure~\ref{fig:PiHalfB}(b).
		
		The longest distance between any pair of points from $\pij$ is achieved by the pair $(i,j)$, so $bm(M^*) \leq bm(M')$. Since $M'$ is a bottleneck matching, $M^*$ is a bottleneck matching as well, and $M^*$ has at least one fewer diagonal than $M'$, a contradiction.
	\end{proof}
	
	\begin{proof}(of Lemma~\ref{lem:ThreeCascadesB})
		Let $M$ be a matching provided by Lemma~\ref{lem:PiHalfB}, with turning angles of all diagonals greater than $\pi/2$. There cannot be a region bounded by four or more diagonals of $M$, since if it existed, the total turning angle would be greater than $2\pi$. Hence, $M$ only has regions with at most three bounding diagonals. Suppose there are two or more $3$-bounded regions. We look at two arbitrary such regions. There are two diagonals bounding the first region and two diagonals bounding the second region such that these four diagonals are in cyclical formation, meaning that each diagonal among them has other three on the same side. Applying the same argument once again we see that this situation is impossible because it yields turning angle greater than $2\pi$. We conclude that there can be at most one $3$-bounded region.
	\end{proof}

	\begin{proof}(of Lemma~\ref{lem:BottleneckWithAllNecessaryB})
		Take any $3$-cascade bottleneck matching $M$. If it has an inner pair $(i,j)$ that is not necessary, then (by definition) there is a solution to $\Matching^1(i,j)$ that does not contain the pair $(i,j)$ and has at most one cascade. We use that solution to replace all pairs from $M$ that are inside $\pij$, and thus obtain a new $3$-cascade matching that does not contain the pair $(i,j)$. Since $M$ was optimal and there was at most one cascade inside $\pij$, pairs that were replaced are also a solution to $\Matching^1(i,j)$, so the new matching must have the same value as the original matching. And since there is no bottleneck matching with at most one cascade, the new matching must be a bottleneck $3$-cascade matching as well. We repeat this process until all inner pairs are necessary. The process has to terminate because the inner region is getting larger with each replacement.
	\end{proof}
	
	\begin{proof}(of Lemma~\ref{lem:BottleneckWithCandidateB})
		Lemma~\ref{lem:BottleneckWithAllNecessaryB} provides us with a $3$-cascade matching $M$ whose every inner pair is necessary. There are at least three inner pairs of $M$, so at least one of them has turning angle at most $2\pi/3$. Otherwise, the total turning angle would be greater than $2\pi$, which is not possible. Such an inner pair is a candidate pair.
	\end{proof}

	\begin{proof}(of Lemma~\ref{lem:CandidatesDontTouchB})
		\fig{fig:CandidatesDontTouchB}{Two candidate diagonals of equal polarity cannot have the same pole.}{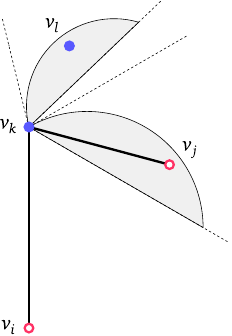}%
		Let us suppose the contrary, that is, that there are two candidate diagonals of the same polarity with the same point as the pole. Assume, w.l.o.g., that $(i,k)$ and $(j,k)$ are two such candidate diagonals, $i \neq j$, both with positive polarity, each having its pole in $k$. Since both $(i,k)$ and $(j,k)$ are feasible pairs, $i$, $j$ and $k$ belong to the same orbit. W.l.o.g., we assume that the order of points in the positive direction is $i$ -- $j$ -- $k$, that is, $j \in (\p{i}{k} \cap \OO{k}) \setminus \{i,k\}$, see Figure~\ref{fig:CandidatesDontTouchB}.	
		
		The region $\Pi^+(i,k)$ lies inside the wedge with the apex $v_k$ and the sides at the angles of $\pi/3$ and $2\pi/3$ with the line $v_kv_i$. Similarly, $\Pi^+(j,k)$ lies inside the wedge with the apex $v_k$ and the sides at the angles of $\pi/3$ and $2\pi/3$ with the line $v_kv_j$. This means that $\Pi^+(i,k)$ and $\Pi^+(j,k)$ have no points in common other than $v_k$.
		
		Since $(j,k)$ is a diagonal, there is $l \in (\p{j}{k} \cap \OO{k}) \setminus \{j, k\}$. But $l \in (\p{i}{k} \cap \OO{k}) \setminus \{i,k\}$ as well, meaning that $l \in \Pi^+(i,k) \cap \Pi^+(j,k)$, which contradicts the conclusion of the previous paragraph.
	\end{proof}

\end{document}